\newtheorem{theorem}{Theorem}
\newtheorem{lemma}{Lemma}
\newtheorem{proposition}{Proposition}
\newtheorem{definition}{Definition}
\newcommand{\eat}[1]{}
\newcommand{\heng}[1]{{{\textcolor{black}{#1}}}} 
\begin{document}
%
% paper title
% Titles are generally capitalized except for words such as a, an, and, as,
% at, but, by, for, in, nor, of, on, or, the, to and up, which are usually
% not capitalized unless they are the first or last word of the title.
% Linebreaks \\ can be used within to get better formatting as desired.
% Do not put math or special symbols in the title.
\title{Semi-Supervised Hierarchical Graph Classification}
%
%
% author names and IEEE memberships
% note positions of commas and nonbreaking spaces ( ~ ) LaTeX will not break
% a structure at a ~ so this keeps an author's name from being broken across
% two lines.
% use \thanks{} to gain access to the first footnote area
% a separate \thanks must be used for each paragraph as LaTeX2e's \thanks
% was not built to handle multiple paragraphs
%
%
%\IEEEcompsocitemizethanks is a special \thanks that produces the bulleted
% lists the Computer Society journals use for "first footnote" author
% affiliations. Use \IEEEcompsocthanksitem which works much like \item
% for each affiliation group. When not in compsoc mode,
% \IEEEcompsocitemizethanks becomes like \thanks and
% \IEEEcompsocthanksitem becomes a line break with idention. This
% facilitates dual compilation, although admittedly the differences in the
% desired content of \author between the different types of papers makes a
% one-size-fits-all approach a daunting prospect. For instance, compsoc 
% journal papers have the author affiliations above the "Manuscript
% received ..."  text while in non-compsoc journals this is reversed. Sigh.

\author{Jia~Li,
        Yongfeng~Huang,
        Heng~Chang
        and~Yu~Rong%
        %,~\IEEEmembership{Member,~IEEE}% <-this % stops a space
\IEEEcompsocitemizethanks{\IEEEcompsocthanksitem Jia Li and Yongfeng Huang are with Hong Kong University of Science and Technology (Guangzhou) and Hong Kong University of Science and Technology.\protect\\
% note need leading \protect in front of \\ to get a newline within \thanks as
% \\ is fragile and will error, could use \hfil\break instead.
E-mail: jialee@ust.hk
\IEEEcompsocthanksitem Heng Chang is with Tsinghua University.
\IEEEcompsocthanksitem Yu Rong is the corresponding author and with Tencent AI Lab. \\E-mail: yu.rong@hotmail.com\protect\\
}% <-this % stops an unwanted space
% \thanks{Manuscript received April 19, 2005; revised August 26, 2015.}
}

\IEEEtitleabstractindextext{%
\begin{abstract}
Node classification and graph classification are two graph learning problems that predict the class label of a node and the class label of a graph respectively.  A node of a graph usually represents a real-world entity, e.g., a user in a social network, or a document in a document citation network.  In this work, we consider a more challenging but practically useful setting, in which a node itself is a graph instance.  This leads to a hierarchical graph perspective which arises in many domains such as social network, biological network and document collection.  We study the node classification problem in the hierarchical graph where a ``node'' is a graph instance.  As labels are usually limited, we design a novel semi-supervised solution named SEAL-CI.  SEAL-CI adopts an iterative framework that takes turns to update two modules, one working at the graph instance level and the other at the hierarchical graph level. To enforce a consistency among different levels of hierarchical graph, we propose the Hierarchical Graph Mutual Information (HGMI) and further present a way to compute HGMI with theoretical guarantee. We demonstrate the effectiveness of this hierarchical graph modeling and the proposed SEAL-CI method on text and social network data.  
\end{abstract}

% Note that keywords are not normally used for peerreview papers.
\begin{IEEEkeywords}
graph representation, graph mutual information, hierarchical graph, semi-supervised learning.
\end{IEEEkeywords}}

% make the title area
\maketitle

% To allow for easy dual compilation without having to reenter the
% abstract/keywords data, the \IEEEtitleabstractindextext text will
% not be used in maketitle, but will appear (i.e., to be "transported")
% here as \IEEEdisplaynontitleabstractindextext when the compsoc 
% or transmag modes are not selected <OR> if conference mode is selected 
% - because all conference papers position the abstract like regular
% papers do.
\IEEEdisplaynontitleabstractindextext
% \IEEEdisplaynontitleabstractindextext has no effect when using
% compsoc or transmag under a non-conference mode.

% For peer review papers, you can put extra information on the cover
% page as needed:
% \ifCLASSOPTIONpeerreview
% \begin{center} \bfseries EDICS Category: 3-BBND \end{center}
% \fi
%
% For peerreview papers, this IEEEtran command inserts a page break and
% creates the second title. It will be ignored for other modes.
\IEEEpeerreviewmaketitle

\IEEEraisesectionheading{\section{Introduction}\label{sec:introduction}}
% Computer Society journal (but not conference!) papers do something unusual
% with the very first section heading (almost always called "Introduction").
% They place it ABOVE the main text! IEEEtran.cls does not automatically do
% this for you, but you can achieve this effect with the provided
% \IEEEraisesectionheading{} command. Note the need to keep any \label that
% is to refer to the section immediately after \section in the above as
% \IEEEraisesectionheading puts \section within a raised box.

% The very first letter is a 2 line initial drop letter followed
% by the rest of the first word in caps (small caps for compsoc).
% 
% form to use if the first word consists of a single letter:
% \IEEEPARstart{A}{demo} file is ....
% 
% form to use if you need the single drop letter followed by
% normal text (unknown if ever used by the IEEE):
% \IEEEPARstart{A}{}demo file is ....
% 
% Some journals put the first two words in caps:
% \IEEEPARstart{T}{his demo} file is ....
% 
% Here we have the typical use of a "T" for an initial drop letter
% and "HIS" in caps to complete the first word.
\IEEEPARstart{G}{raph} has been widely used to model real-world entities and the relationship among them.  Two graph learning problems have received a lot of attention recently, i.e., node classification and graph classification.  Node classification is to predict the class label of nodes in a graph, for which many studies \cite{chang2021spectral,rong2019dropedge,tang2022rethinking} in the literature make use of the connections between nodes to boost the classification performance.  For example, \cite{ramanath2018towards} enhances the recommendation precision in LinkedIn by taking advantage of the interaction network, and \cite{sen2008collective} improves the performance of document classification by exploiting the citation network.  Graph classification, on the other hand, is to predict the class label of graphs, for which various graph kernels \cite{borgwardt2005shortest,gartner2003graph,shervashidze2009efficient,shervashidze2011weisfeiler} and deep learning approaches \cite{Niepert2016LearningCN,DBLP:journals/corr/NarayananCVCLJ17,li2020graph} have been designed.  In this work, we consider a more challenging but practically useful setting, in which a node itself is a graph instance.  This leads to \emph{a hierarchical graph in which a set of graph instances are interconnected via edges}.  This is a very expressive data representation, as it considers the relationship between graph instances, rather than treating them independently.  The hierarchical graph model applies to many real-world data, for example, a social network can be modeled as a hierarchical graph, in which a user group is represented by a graph instance and treated as a node in the hierarchical graph, and then a number of user groups are interconnected via interactions or common members.  As another example, a document collection can be modeled as a hierarchical graph, in which a document is regarded as a graph-of-words \cite{rousseau2015text}, and then a set of documents are interconnected via the citation relationship.  In this paper, we study \emph{hierarchical graph classification, which predicts the class label of graph instances in a hierarchical graph}.

To represent a hierarchical graph, there is a natural and important question:``given multiple levels of inputs and representations of a hierarchical graph, how can we enforce a consistency among different levels of the graph?" In this work, we propose to use \emph{mutual information} (MI) to enforce this consistency. We are motivated by recent developments of graph MI maximization methods \cite{peng2020graph}\cite{sun2019infograph}, and generalize the MI computation to hierarchical graphs, which is named Hierarchical Graph Mutual Information (HGMI). Our theoretical derivations show that HGMI can be decomposed into a linear combination of node-level MI \cite{peng2020graph} and graph-level MI \cite{sun2019infograph}. In this regard, we can use non-hierarchical graph MI computational methods to compute HGMI.  More specifically, for graph instances, we compute MI between nodes and instance representations via graph-level MI computational methods (e.g., INFOGRAPH \cite{sun2019infograph}); for connections between graph instances, we compute MI between instances and hierarchical graph representations via node-level MI computational methods (e.g., GMI \cite{peng2020graph}).

Another challenge is that the amount of available class labels is usually very small in real-world data, which limits the classification performance.  To address this challenge, we take a semi-supervised learning approach to solving the graph classification problem.  We design an iterative algorithm framework which takes turns to update two modules: Instance Classifier (IC) and Hierarchical Classifier (HC).  We start with the limited labeled training set and build IC, which produces the embedding vectors of graph instances.  HC takes the embedding vectors as input and produces predictions.  We cautiously select a subset of predicted labels with high confidence to enlarge the training set.  The enlarged training set is then fed into IC in the next iteration to update its parameters in the hope of generating more accurate embedding vectors.  HC further takes the new embedding vectors for model update and class prediction.  This is our proposed solution, called \underline{SE}mi-supervised gr\underline{A}ph c\underline{L}assification via \underline{C}autious \underline{I}teration (SEAL-CI), to the graph classification problem.

\eat{We also extend this iterative algorithm to the active learning framework, in which we iteratively select the most informative instances for manual annotation, and then update the classifiers with the newly labeled instances in a similar process as described above.  This method is called SEAL-AI in short.}

Our contributions are summarized as follows.

\begin{itemize}
\item We study semi-supervised hierarchical graph classification, which is scarcely studied in the literature.  Our proposed solution SEAL-CI achieves superior classification performance to the state-of-the-other graph kernel and deep learning methods, even when given very few labeled training instances.

\item We generalize the MI estimation to the hierarchical graph domain and propose a new concept named Hierarchical Graph Mutual Information (HGMI). We show HGMI can be decomposed by a sum of mutual information in non-hierarchical graph domain. Upon this, HGMI maximization can be achieved with the help of node/graph-level MI maximization. 

\item We present the HIERARCHICAL GRAPH BENCHMARK \footnote{data and code are available at https://hiergraph.github.io/} with both text data and social network data, with the goal
of facilitating reproducible hierarchical graph research.  From the social networking platform Tencent QQ, we collect 37,836 QQ groups with 18,422,331 unique anonymized users, in which the hierarchical graph is constructed with common memberships (hierarchy-level) and friendships (instance-level). From the arXiv papers, we collect 4,666 papers, in which the hierarchical graph is constructed with citations (hierarchy-level) and semantics (instance-level)
\end{itemize}

The remainder of this paper is organized as follows. Section \ref{sec.related} discusses the related works and Section \ref{def} gives the problem definition. Section \ref{alt} describes the design of SEAL-CI.  We report the experimental results in Section \ref{sec.exp}.  Finally, Section \ref{sec.con} concludes the paper.

\section{Related Work}\label{sec.related}

\textbf{Hierarchical graph representation}
\textcolor{black}{In the literature, most graph representation methods focus on node-level representations \cite{perozzi2014deepwalk}\cite{grover2016node2vec}\cite{kipf2017semi}\cite{defferrard2016convolutional} or graph-level representations \cite{Yanardag:2015:DGK:2783258.2783417}\cite{xu2018powerful}\cite{velivckovic2017graph}. One connection between these two kinds of representations is that graph-level representations could be obtained by integrating node-level representations with graph pooling operations, e.g., DIFFPOOL \cite{ying2018hierarchical}, SAGPool \cite{lee2019self}, Attention-Pool \cite{jiawww19}, MinCut-Pool \cite{bianchi2020mincutpool}. Until recently Li et al. \cite{jiawww19} firstly formulate the hierarchical graph representation problem and evaluate the effectiveness of hierarchical graph representation on social network classification tasks. Wang et al. \cite{wang2020gognn} propose a Graph of Graphs Neural Network (GoGNN)  and evaluate the proposed GoGNN on entity interaction prediction tasks. Xu et al. \cite{xu2021highair} consider city-level graph representations and station-level graph representations to facilitate the air quality forecasting tasks. MIRACLE \cite{wang2021multi} utilizes the multi-view contrast learning method to derive hierarchical graph representations and validated its effectiveness on Drug-Drug Interactions (DDI). However, currently there are still several problems within the hierarchical graph representation research. Firstly, most previous works are based on heuristics and lack the theoretical derivations of consistent representations among different levels of the hierarchical graph. Secondly, most works mildly evaluate the expressness of hierarchical graph representations on one task, which makes the whole story less convincing. }

\textbf{Graph mutual information maximization}
\textcolor{black}{There has been a surging interest in using MI to derive unsupervised graph representations recently. For node-level representations, DGI
\cite{velivckovic2018deep} maximizes MI between a graph summery representation and node-level representations and shows that maximizing this kind of MI is equivalent to maximizing the one between the node features and node-level representations. Later, GMI \cite{peng2020graph} directly approaches MI computation by comparing the node input and node-level representations, without the overheads of an extra graph summery representation. For graph-level representations, INFOGRAPH \cite{sun2019infograph} and MVGRL \cite{icml2020_1971} extend the idea of DGI to the field of graph-level representation and contrasts graph-level and subgraph-level representations. Our work advances this research area by proposing hierarchical graph MI and develops an tractable computation method with guarantees.}

\begin{figure}
\begin{center}
\includegraphics [width=0.35\textwidth]{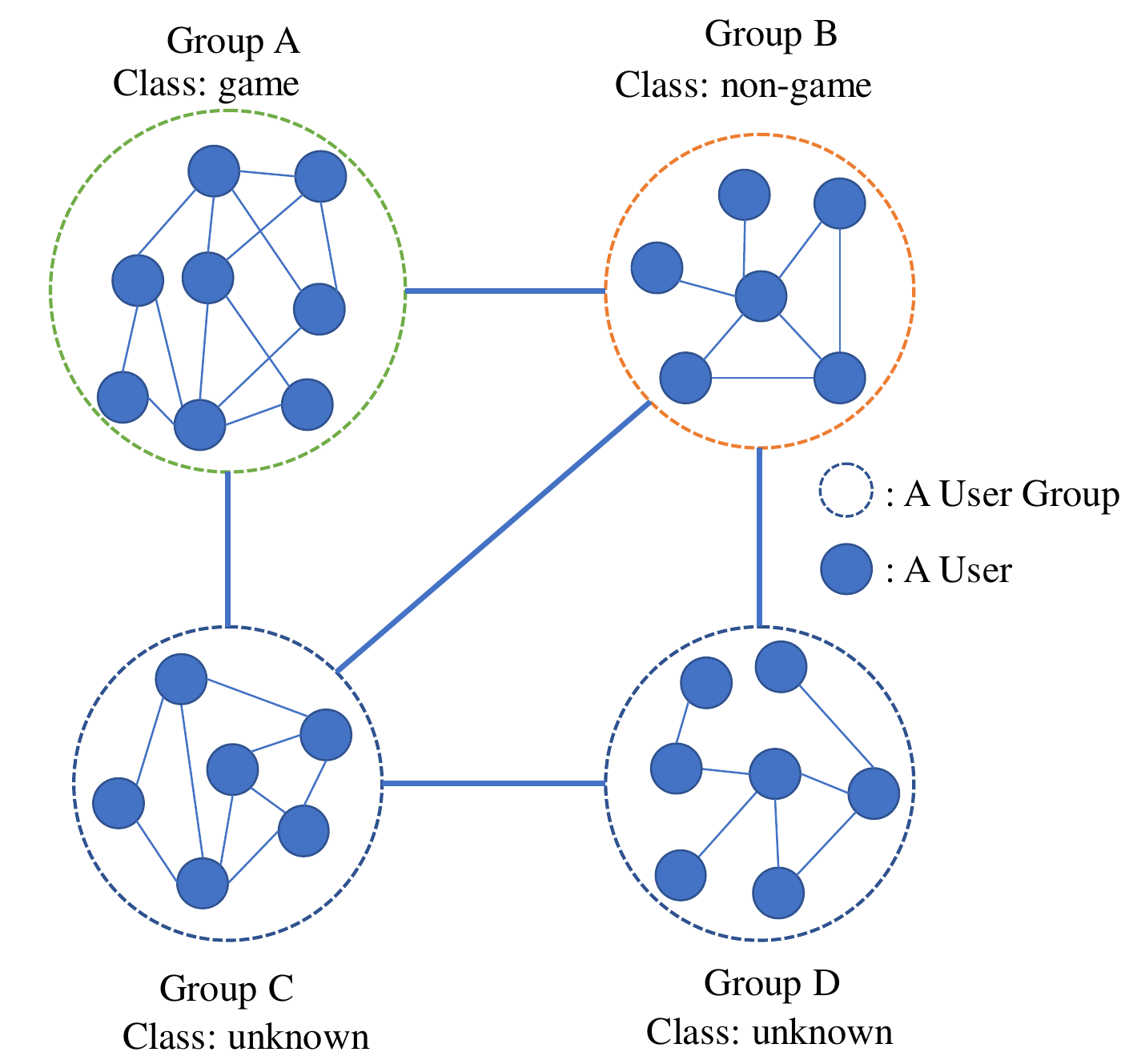}
\end{center}

\caption{\textcolor{black}{A hierarchical graph example with four graph instances $A, B, C, D$, each of which corresponds to a user group in a social network. This figure is adapted from \cite{jiawww19}.}  }
\label{fig.example1}
%\vspace{-0.3cm}
\end{figure}

\textbf{Semi-supervised graph learning}
\textcolor{black}{In many graph applications, the size of labeled instances is typically limited. In addition to labeled instances, semi-supervised graph learning seeks to leverage unlabeled instances within the network structure to obtain gains in performance. In the literature, semi-supervised graph learning techniques can be generally classified into two groups. The first group is consistency regularization. Representative works include graph Laplacian regularization~\cite{zhu2003semi}\cite{yang2016revisiting}, graph contrastive regularization \cite{cg_aaai21}\cite{sun2019infograph}\cite{sun2020multi}. This group aims to learn a representation function that gives consistent predictions for similar data points. The second group is pseudo-label. As the name indicates, this group uses a trained model on the labeled set to produce additional training examples by labeling instances of the unlabeled set \cite{ouali2020overview}. Representative works include ICA \cite{sen2008collective}, Cautious Iteration \cite{mcdowell2007cautious}\cite{cascante2020curriculum}.  In ICA, for each node a local classifier takes the estimated labels of its neighborhood and its own features as input, and outputs a new pseudo label.  The iteration continues until adjacent estimations stabilize. In this work,  our proposed method SEAL-CI connects the idea of  both consistency regularization and pseudo-label. We show our algorithm has a convergence property w.r.t. the empirical risk.}

\eat{Active learning has been integrated in many collective classification methods \cite{settles2012active}\cite{bilgic2010active} to find the most informative samples to be labeled.  However, research that generalizes active learning with deep semi-supervised learning is still sparse.  The closest work is \cite{zhou2017fine} in which the authors utilize active learning to incrementally fine-tune a CNN network for image classification.  Our solution SEAL-AI is different in the sense that the informative samples selected by active learning are used to update the parameters of the graph embedding network, whose output is then fed into HC in an iterative framework.}

\section{Problem Definition}\label{def}
We denote a set of objects as $O=\{o_1, o_2, \ldots, o_N\}$ which represent real-world entities.  We use $d$ attributes to describe properties of objects, e.g., age, gender.

We use a \emph{graph instance} to model the relationship between objects in $O$, which is denoted as $g=(V, A, X)$, $V\subseteq O$ is the node set and $|V|=n$, $A$ is an $n\times n$ adjacency matrix representing the connectivity in $g$, and $X \in \mathbb{R}^{n \times d}$ is a matrix recording the attribute values of all nodes in $g$.

A set of graph instances $G=\{g_1,g_2,\ldots\}$ can be interconnected, and the connectivity between the graph instances is represented by an adjacency matrix $\mathcal{A}$.  The graph instances and their connections are modeled as a \emph{hierarchical graph} $\mathcal{G}=(G,\mathcal{A})$.

A graph instance $g\in G$ is a \emph{labeled graph} if it has a class label, represented by a vector $y\in \{0, 1\}^c$, where $c$ is the number of classes.  A graph instance is \emph{unlabeled} if its class label is unknown.  Then $G$ can be divided into two subsets: labeled graphs $G_l$ and unlabeled graphs $G_u$, where $G=G_l\cup G_u$, $|G_l|=L$ and $|G_u|=U$, $Y = \{y_i\}_{i=1}^L$ denotes the labeled ground truth.  In this paper, we study the problem of \emph{\textbf{graph classification}}, which determines the class label of the unlabeled graph instances in $G_u$ from the available class labels in $G_l$ and the hierarchical graph topological structure.  As the amount of available class labels is usually very limited in real-world data, we take a semi-supervised learning approach to solving this problem.

Our problem formulation is applicable to many real applications as illustrated below.

\textbf{Application 1} Consider a social network where we want to infer the topics of user groups, as depicted in Figure \ref{fig.example1}. $A, B, C, D$ denote four user groups.  Group $A$ has the class label of ``game'', $B$ has the label of ``non-game'', while the class labels of $C$ and $D$ are unknown.  These four groups are connected via some kind of relationships, e.g., interactions or common members.  The internal structure of each user group shows the connections between individual users.  From this hierarchical graph, we want to determine the class labels of groups $C$ and $D$.

\textbf{Application 2} Consider an information retrieval scenario where we want to category documents. We can model a document as a graph-of-words, i.e., graph instances in a hierarchical graph. We then connect these documents by their citation relationships, i.e., connections between graph instances in a hierarchical graph. With this setting, we can achieve better classification performance by taking advantage of both out-of-document relations and inside-of-document semantics (See details in Section \ref{textdata}).

\textbf{Application 3} Protein-Protein Interaction (PPI) sheds light on biological processes, of which one important task is to predict properties of proteins. We can model a protein as a graph of amino-acid interactions \cite{gligorijevic2021structure}, i.e., graph instances in a hierarchical graph. We then connect these proteins by their interactions, forming a hierarchical graph. With this formulation, we can achieve better prediction performance by considering both out-of-protein interactions and inside-of-protein interactions.

%From the above application example, one can expect our method can be applied to social network analysis, information retrieval and Protein property prediction, etc. 
%As another perspective, we can understand our formulation from ``internal and external factors'', in which graph instances models the relations between internal factors and connections between these instances captures the relations among external factors. 

\section{Methodology}\label{alt}

\subsection{Problem Formulation}
In our problem setting, we have two kinds of information: graph instances and connections between the graph instances, which provide us with two perspectives to tackle the graph classification problem.  Accordingly, we build two classifiers: a classifier IC constructed for graph instances and a classifier HC constructed for the hierarchical graph.

For both classifiers, one goal is to minimize the supervised loss, which measures the distance between the predicted class probabilities and the true labels. \eat{Another goal is to minimize a disagreement loss, which measures the distance between the predicted class probabilities by IC and HC.  The purpose of this disagreement loss is to enforce a \emph{\textbf{consistency}} between the two classifiers.}Another goal is to maximize a hierarchical graph mutual information, which measures the distance among the input hierarchical graph, IC representations and HC representations.  The purpose of this hierarchical graph mutual information is to enforce a \emph{\textbf{consistency}} among different levels of hierarchical graph.

Formally, we formulate the graph classification problem as an optimization problem:

\begin{equation}
 \arg \min \zeta(G_l) - \xi(\mathcal{G}),
\label{equ.total}
\end{equation}
where $\zeta(G_l)$ is the empirical risk for the labeled graph instances, and $\xi(\mathcal{G})$ is the hierarchical graph mutual information for all graph instances.

Specifically, $\zeta(G_l)$ includes two parts:
\begin{equation}
\zeta(G_l) = \frac{1}{L}\sum_{g_i\in G_l}(\text{CE}(y_i, e_i) + \text{CE}(y_i, \gamma_i)),
\label{equ.super}
\end{equation}
where $e_i$ is a vector of predicted class probabilities by IC, and $\gamma_i$ is a vector of predicted class probabilities by HC.  $\text{CE}(\cdot, \cdot)$ is the cross-entropy loss function.

The hierarchical graph mutual information (HGMI) $\xi(\cdot)$ is defined as:
\begin{equation}
\xi(\mathcal{G}) = \text{HGMI}(\mathcal{G}) = I(G;E;\Gamma),
\label{equ.unsuper}
\end{equation}
where $I(\cdot;\cdot;\cdot)$ denotes the mutual information between a set of three variables, $E = \{e_i\}_{i=1}^{L+U}$ denotes the representations derived by IC and $\Gamma = \{\gamma_i\}_{i=1}^{L+U}$ denotes the representations derived by HC.  In the following subsections, we first analyse HGMI and describe the way to compute HGMI;  we then give our design of classifiers IC and HC, and our detailed training algorithms.

\eat{The disagreement loss $\xi(\cdot)$ is defined as:
\begin{equation}
\xi(G_u) = \sum_{g_i\in G_u}D_{KL}(\gamma_i || \psi_i),
\label{equ.unsuper}
\end{equation}
where $D_{KL}(\cdot||\cdot)$ is the Kullback-Leibler divergence, $D_{KL}(P||Q) = \sum_jP_j\log \big(\frac{P_j}{Q_j}\big)$.  In the following subsections, we describe our design of classifiers IC and HC, and our approach to minimizing the supervised loss and the disagreement loss.}

\begin{figure}
\begin{center}
\includegraphics [width=0.48\textwidth]{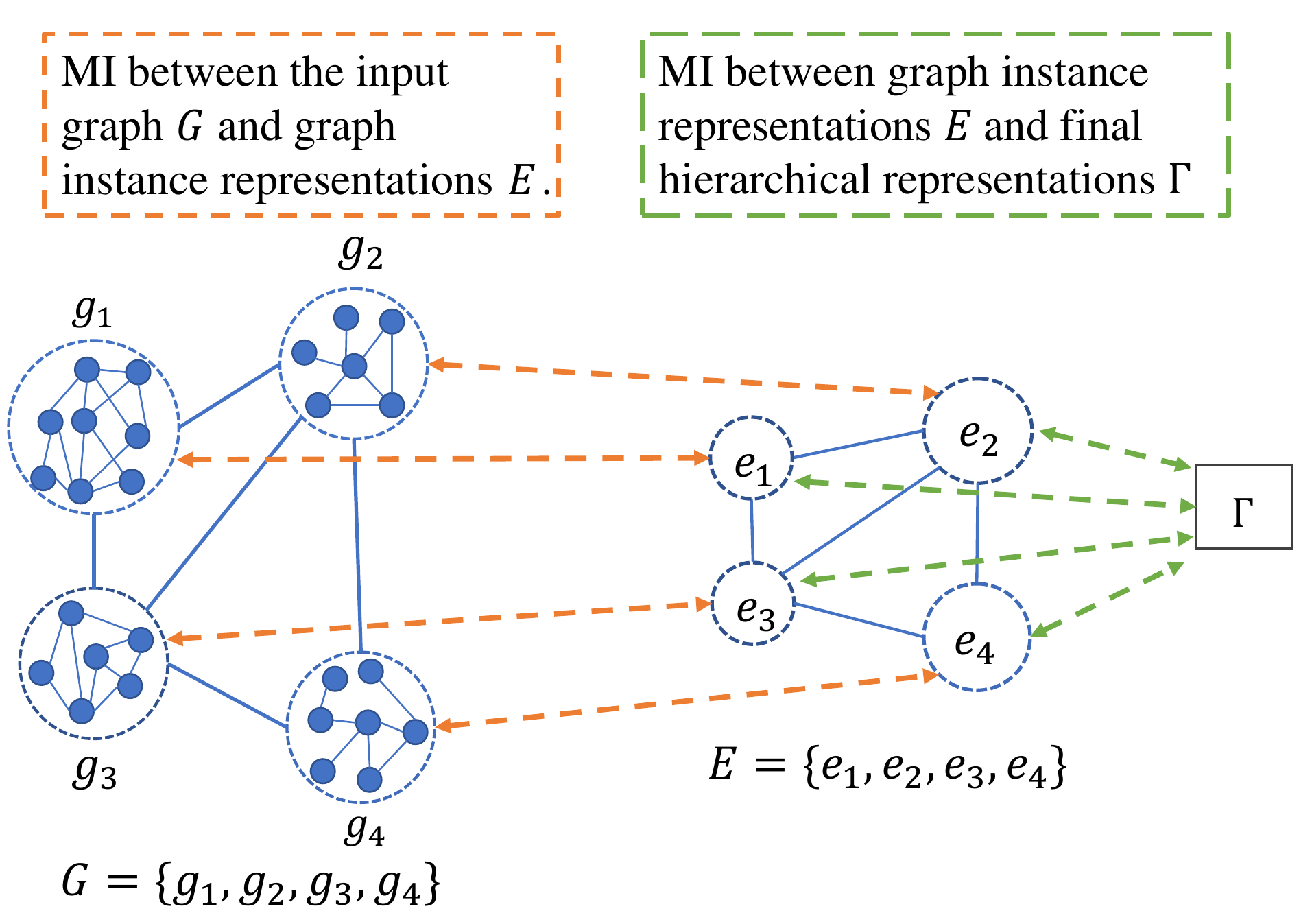}
\end{center}

\caption{Overview of the proposed hierarchical graph mutual information computation. The orange part shows the mutual information computation between the input $G$ and graph instance representations $E$; the green part shows the mutual information computation between graph instance representations $E$ and final hierarchical representations $\Gamma$.}
\label{fig.mi}
%\vspace{-0.3cm}
\end{figure}

\subsection{Analysis of HGMI}\label{hgmi}
We first show HGMI coincides with graph mutual information between $G$ and $\Gamma$ when the hierarchical graph forms a Markov chain.

\begin{theorem}
    \emph{Consider the hierarchical graph forms a Markov chain, i.e., $G \rightarrow E \rightarrow \Gamma$, then
    \begin{equation}
    I(G;E;\Gamma) = I(G;\Gamma),
    \end{equation}
    here $I(G;\Gamma)$ is the mutual information between the input graph instances and hierarchical graph representations.
}
    \label{thm:equa}
\end{theorem}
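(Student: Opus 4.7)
The plan is to prove this directly from the standard identity relating three-way (interaction) mutual information to pairwise and conditional mutual information, combined with the conditional independence implied by the Markov chain assumption. So the proof will really just be a two-line calculation once the definition is fixed.

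First, I would recall/state the definition of three-variable mutual information being used in the paper, namely the inclusion–exclusion form
\begin{equation*}
I(G;E;\Gamma) \;=\; I(G;\Gamma) \;-\; I(G;\Gamma \mid E),
\end{equation*}
which can equivalently be written symmetrically in terms of the other pairs. This identity holds unconditionally and is immediate from expanding entropies, so I would invoke it as a standard information-theoretic fact rather than re-derive it.

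Next, I would apply the Markov chain hypothesis $G \to E \to \Gamma$. By definition of a Markov chain, $G$ and $\Gamma$ are conditionally independent given $E$, so $p(G,\Gamma \mid E) = p(G\mid E)\, p(\Gamma \mid E)$, which implies
\begin{equation*}
I(G;\Gamma \mid E) \;=\; 0.
\end{equation*}
Substituting this into the previous identity yields $I(G;E;\Gamma) = I(G;\Gamma)$, which is exactly the claim.

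There is essentially no obstacle beyond pinning down notation: the only subtle point is that several papers use slightly different conventions for the sign/definition of three-way mutual information (interaction information vs.\ total correlation), so I would include one sentence clarifying that the definition used here is the inclusion–exclusion one $I(G;E;\Gamma) := I(G;\Gamma) - I(G;\Gamma\mid E)$, which is the standard choice consistent with its later decomposition into node-level and graph-level MI used in the subsequent subsections of the paper. With that fixed, the argument is a one-step application of the Markov property.
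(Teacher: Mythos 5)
Your proof is correct and follows exactly the same route as the paper: both invoke the inclusion--exclusion identity $I(G;E;\Gamma) = I(G;\Gamma) - I(G;\Gamma\mid E)$ and then use the Markov property to set $I(G;\Gamma\mid E)=0$. Your extra sentence pinning down the sign convention for the three-way (interaction) mutual information is a sensible addition, but the argument itself is identical to the paper's.
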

The proof is trivial, since the three variables form a Markov chain $G \rightarrow E \rightarrow \Gamma$, then $I(G;\Gamma|E) = 0$, we have
    \begin{equation}
    I(G;E;\Gamma) = I(G;\Gamma) - I(G;\Gamma|E) = I(G;\Gamma).
    \end{equation}
It was worth mentioning that the Markov property is quite reasonable here. For example on text data, upon given document representation, it is reasonable to assume graph-of-words inputs (within documents) are irrelevant when predicting document citations (outside of documents).  

We have the following hierarchical graph mutual information decomposition theorem to compute HGMI.
\begin{theorem}
    \emph{Consider the hierarchical graph forms a Markov chain, the hierarchical graph mutual information  can be decomposed by  a sum of non-hierarchical graph mutual information, namely,
    \begin{equation}
    I(G;E;\Gamma) = \alpha(I(G;E) + I(E;\Gamma)),
    \end{equation}
    here $\alpha \in [0,\frac{1}{2}]$, $I(G;E)$ is the mutual information between graph instance input and graph representation on instance level, $I(E;\Gamma)$ is the mutual information between instance representation and hierarchical graph representation on hierarchical level.
}
    \label{thm:submod}
\end{theorem}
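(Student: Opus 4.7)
The plan is to combine the Markov chain collapse from Theorem~\ref{thm:equa} with the data processing inequality (DPI) to sandwich $I(G;E;\Gamma)$ between $0$ and the average $\tfrac{1}{2}(I(G;E)+I(E;\Gamma))$, then define $\alpha$ as the resulting ratio.

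First, I would invoke Theorem~\ref{thm:equa} to reduce the triple mutual information to a pairwise one: since $G \to E \to \Gamma$ is a Markov chain, $I(G;E;\Gamma) = I(G;\Gamma)$. This converts the problem into bounding an ordinary mutual information in terms of $I(G;E)$ and $I(E;\Gamma)$.

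Next, I would apply the data processing inequality twice along the chain. DPI gives $I(G;\Gamma) \le I(G;E)$ (since $\Gamma$ is a downstream function/channel of $E$) and, using reversibility of the Markov property $\Gamma \to E \to G$, also $I(G;\Gamma) \le I(E;\Gamma)$. Averaging the two bounds yields
\begin{equation}
I(G;E;\Gamma) \;=\; I(G;\Gamma) \;\le\; \tfrac{1}{2}\bigl(I(G;E) + I(E;\Gamma)\bigr).
\end{equation}
Together with non-negativity of mutual information ($I(G;\Gamma) \ge 0$), this places $I(G;E;\Gamma)$ in the interval $[0,\tfrac{1}{2}(I(G;E)+I(E;\Gamma))]$.

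Finally, I would conclude by defining
\begin{equation}
\alpha \;=\; \frac{I(G;E;\Gamma)}{I(G;E) + I(E;\Gamma)}
\end{equation}
whenever the denominator is positive, which by the sandwich bound satisfies $\alpha \in [0,\tfrac{1}{2}]$; in the degenerate case $I(G;E)+I(E;\Gamma)=0$, both sides are zero and any $\alpha\in[0,\tfrac{1}{2}]$ works. I do not expect a genuine obstacle here: the result is essentially a normalization of the DPI bound, and the only subtlety worth flagging explicitly is justifying the second DPI inequality (the chain runs both ways because $G \to E \to \Gamma$ Markov implies $\Gamma \to E \to G$ Markov). Everything else is arithmetic.
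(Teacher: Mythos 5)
Your proof is correct, and it takes a genuinely different (and arguably cleaner) route than the paper's. The paper never invokes Theorem~\ref{thm:equa} in its proof of Theorem~\ref{thm:submod}; instead it expands the interaction information via chain rules into $I(G;E)+I(E;\Gamma)-I(E;G,\Gamma)$ and then sandwiches the joint term $I(E;G,\Gamma)$ from both sides: the upper bound $I(E;G,\Gamma)\le I(E;G)+I(E;\Gamma)$ comes from a multiplicativity criterion of Renner--Maurer applied to the Markov structure (Lemma~\ref{thm:l2}), and the lower bound $I(E;G,\Gamma)\ge\tfrac{1}{2}(I(E;G)+I(E;\Gamma))$ from averaging two applications of the chain rule (Lemma~\ref{thm:l1}). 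You instead collapse $I(G;E;\Gamma)$ to $I(G;\Gamma)$ first and then apply the data processing inequality in both directions of the chain, obtaining the same sandwich $0\le I(G;E;\Gamma)\le\tfrac{1}{2}(I(G;E)+I(E;\Gamma))$ with only textbook tools --- no external multiplicativity result needed. Your observation that the reversed chain $\Gamma\to E\to G$ is also Markov (conditional independence given $E$ is symmetric) is exactly the justification required for the second DPI step, and your handling of the degenerate denominator is something the paper glosses over. Both proofs ultimately establish the same sandwich and then define $\alpha$ as the ratio; yours makes it slightly more explicit that the ``theorem'' is really a normalization of that bound, and in fact your intermediate bound $I(G;\Gamma)\le\min(I(G;E),I(E;\Gamma))$ is tighter than the average used in the final statement.
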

 To prove Theorem \ref{thm:submod}, we first introduce two lemmas.
 \begin{lemma}
    \emph{For any random variable $Z_1$, $Z_2$, $Z_3$, we have
    \begin{equation}
    I(Z_1;Z_2,Z_3) \geq \frac{1}{2}(I(Z_1;Z_2) + I(Z_1;Z_3)),
    \end{equation}
    here $I(Z_1;Z_2,Z_3)$ is the mutual information between variable $Z_1$ and the joint distribution of $Z_2$ and $Z_3$ .
}
    \label{thm:l1}
\end{lemma}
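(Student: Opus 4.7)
The plan is to apply the chain rule for mutual information in two symmetric ways and then exploit the non-negativity of conditional mutual information. The key observation is that $I(Z_1;Z_2,Z_3)$ can be decomposed into $I(Z_1;Z_2)$ plus a leftover conditional term, but by symmetry it can also be decomposed starting from $Z_3$. Adding these two expansions will make the desired pair $I(Z_1;Z_2) + I(Z_1;Z_3)$ appear naturally, with a non-negative remainder that we then drop.

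First I would write the two chain-rule identities
\begin{equation}
I(Z_1;Z_2,Z_3) = I(Z_1;Z_2) + I(Z_1;Z_3 \mid Z_2),
\end{equation}
\begin{equation}
I(Z_1;Z_2,Z_3) = I(Z_1;Z_3) + I(Z_1;Z_2 \mid Z_3).
\end{equation}
Summing the two gives
\begin{equation}
2\, I(Z_1;Z_2,Z_3) = I(Z_1;Z_2) + I(Z_1;Z_3) + I(Z_1;Z_3 \mid Z_2) + I(Z_1;Z_2 \mid Z_3).
\end{equation}

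Next I would invoke the standard fact that conditional mutual information is non-negative, so the last two terms on the right-hand side are $\geq 0$. Dropping them and dividing by $2$ yields the claimed inequality
\begin{equation}
I(Z_1;Z_2,Z_3) \geq \tfrac{1}{2}\bigl(I(Z_1;Z_2) + I(Z_1;Z_3)\bigr).
\end{equation}

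There is no real obstacle: this is essentially a two-line consequence of the chain rule plus non-negativity of conditional MI, and requires no assumptions on the joint distribution of $(Z_1,Z_2,Z_3)$ beyond existence of the relevant information quantities. The only thing worth remarking is that equality holds precisely when $I(Z_1;Z_3 \mid Z_2) = I(Z_1;Z_2 \mid Z_3) = 0$, which is the degenerate case where $Z_2$ and $Z_3$ carry the same information about $Z_1$; this suggests the $\tfrac{1}{2}$ coefficient is the correct one and foreshadows why the coefficient $\alpha$ in Theorem~\ref{thm:submod} lies in $[0,\tfrac{1}{2}]$ rather than a larger range.
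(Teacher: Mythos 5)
Your proof is correct and is essentially the same argument the paper gives: the paper derives $I(Z_1;Z_2,Z_3) \geq I(Z_1;Z_3)$ and $I(Z_1;Z_2,Z_3) \geq I(Z_1;Z_2)$ from the two chain-rule expansions plus non-negativity of conditional mutual information, then averages the two bounds, which is just a rearrangement of your summed version. Your closing remark on the equality case is a nice addition but does not change the substance.
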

 To prove Lemma \ref{thm:l1}, we make use of the chain rule for mutual information.
\begin{align}
\begin{split}
   I(Z_1;Z_2,Z_3) &= I(Z_1;Z_3) + I(Z_1;Z_2|Z_3)\\
   &\geq I(Z_1;Z_3).
\label{geq6}
\end{split}
\end{align}
The last inequality holds as mutual information is non-negative. Accordingly, we have
\begin{equation}
    I(Z_1;Z_2,Z_3) \geq I(Z_1;Z_2).
    \label{geq7}
 \end{equation}
Based on Eq.~\eqref{geq6} and Eq.~\eqref{geq7}, we complete the proof of Lemma \ref{thm:l1}. 
 \begin{lemma}
    \emph{For a Markov chain $G \rightarrow E \rightarrow \Gamma$, we have
    \begin{equation}
    I(E;G,\Gamma) \leq I(E;G) + I(E;\Gamma).
    \end{equation}
}
    \label{thm:l2}
    \vspace{-0.25cm}
\end{lemma}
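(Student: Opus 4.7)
The plan is to reduce the inequality to a single comparison between a conditional and an unconditional mutual information, and then exploit the Markov hypothesis $I(G;\Gamma \mid E) = 0$ to close the gap.

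First I would expand the left-hand side with the chain rule:
\begin{equation*}
I(E; G, \Gamma) = I(E; G) + I(E; \Gamma \mid G).
\end{equation*}
Given this, the target inequality is equivalent to the claim $I(E; \Gamma \mid G) \leq I(E; \Gamma)$, so it suffices to prove this reduced statement.

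Next I would establish $I(E; \Gamma \mid G) \leq I(E; \Gamma)$ by writing $I(E, G; \Gamma)$ in the two symmetric ways given by the chain rule:
\begin{equation*}
I(E, G; \Gamma) = I(G; \Gamma) + I(E; \Gamma \mid G) = I(E; \Gamma) + I(G; \Gamma \mid E).
\end{equation*}
The Markov assumption $G \to E \to \Gamma$ forces $I(G; \Gamma \mid E) = 0$, so the above identity collapses to
\begin{equation*}
I(E; \Gamma \mid G) = I(E; \Gamma) - I(G; \Gamma),
\end{equation*}
and since mutual information is non-negative we conclude $I(E; \Gamma \mid G) \leq I(E; \Gamma)$. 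Substituting back into the chain-rule expansion of $I(E; G, \Gamma)$ yields the desired bound.

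There is no serious obstacle here: everything follows from two applications of the chain rule plus non-negativity, with the Markov condition supplying the one nontrivial cancellation. The only point that needs a moment of care is that, in general, conditioning on an extra variable can either increase or decrease mutual information, so the inequality $I(E; \Gamma \mid G) \leq I(E; \Gamma)$ is not automatic; it is precisely the Markov structure that rules out the unfavorable direction. This mirrors the proof style of Lemma~\ref{thm:l1} (also based on chain-rule manipulations) and sets up the decomposition used in Theorem~\ref{thm:submod}.
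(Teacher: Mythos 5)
Your proof is correct, but it takes a genuinely different route from the paper's. The paper expands the left-hand side conditioning on $\Gamma$ first, writing $I(E;G,\Gamma) = I(E;\Gamma) + I(E;G\mid\Gamma)$, and then bounds $I(E;G\mid\Gamma) \leq I(E;G)$ by appealing to an external result of Renner and Maurer: since the Markov property gives $P(\Gamma\mid G,E)=P(\Gamma\mid E)$, the conditional distribution is multiplicative, which implies that conditioning on $\Gamma$ cannot increase the mutual information between $G$ and $E$. You instead expand conditioning on $G$ first, $I(E;G,\Gamma) = I(E;G) + I(E;\Gamma\mid G)$, and establish $I(E;\Gamma\mid G) \leq I(E;\Gamma)$ by writing $I(E,G;\Gamma)$ two ways with the chain rule and using $I(G;\Gamma\mid E)=0$ directly. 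Your argument is more elementary and self-contained --- it needs nothing beyond the chain rule and non-negativity --- and it actually delivers more: the exact identity $I(E;G,\Gamma) = I(E;G) + I(E;\Gamma) - I(G;\Gamma)$, which pins the slack in the lemma at precisely $I(G;\Gamma)$ and is consistent with Theorem~\ref{thm:equa}. The paper's route buys a statement (conditioning under a multiplicative conditional cannot increase mutual information) that is reusable in other contexts, but for this lemma your derivation is the cleaner one. You are also right to flag that $I(E;\Gamma\mid G) \leq I(E;\Gamma)$ is not automatic in general; identifying the Markov structure as the reason it holds here is exactly the point that needs care.
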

\emph{Proof}. According to \cite{renner2002mutual}, if the conditional distribution $P(Z_3|Z_1,Z_2)$ is multiplicative, i.e., $\exists$ two functions $s_1$ and $s_2$, s.t., $P(Z_3|Z_1,Z_2) = s_1(Z_3,Z_1)s_2(Z_3,Z_2)$, then $I(Z_1;Z_2) \geq I(Z_1;Z_2|Z_3)$. Since $G \rightarrow E \rightarrow \Gamma$ is a Markov chain, we have
$P(\Gamma|G,E) = P(\Gamma|E)$, which means $P(\Gamma|G,E)$ is multiplicative. Thus, $I(G;E) \geq I(G;E|\Gamma)$ holds in our case.
\begin{align}
\begin{split}
   I(E;G,\Gamma) &= I(E;\Gamma) + I(E;G|\Gamma)\\
   &\leq I(E;G) + I(E;\Gamma).
\label{geq8}
\end{split}
\end{align}
Thus, we complete the proof of Lemma \ref{thm:l2}. 
We then prove Theorem \ref{thm:submod}:
\begin{align}
\begin{split}
   I(G;E;\Gamma) &= I(G;E) - I(G;E|\Gamma)\\
   &= I(G;E) - (I(E;G,\Gamma)-I(E;\Gamma))\\
   &= I(G;E) + I(E;\Gamma) - I(E;G,\Gamma)\\
   &= \alpha(I(G;E) + I(E;\Gamma)).
\label{gce}
\end{split}
\end{align}
The last equality holds as we have $\frac{1}{2}(I(G;E) + I(E;\Gamma)) \leq I(E;G,\Gamma) \leq I(G;E) + I(E;\Gamma)$, based on Lemma \ref{thm:l1} and Lemma \ref{thm:l2}. Thus, Theorem \ref{thm:submod} is proved.

\eat{
\begin{figure*}
\begin{center}
\captionsetup{justification=centering}
\includegraphics [width=0.9\textwidth]{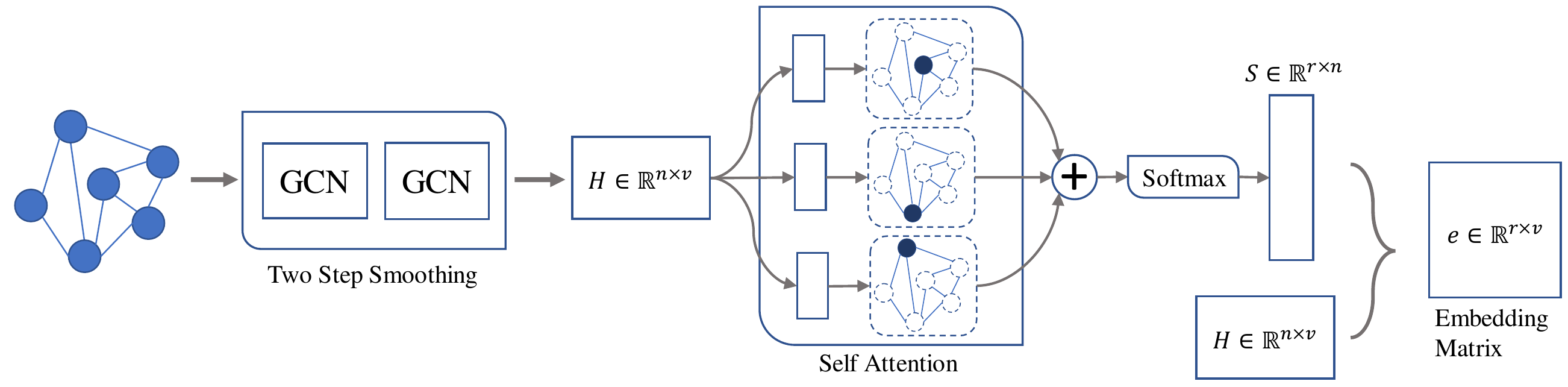}
\caption{The supervised self-attentive graph embedding method SAGE.}
\label{fig.SAGE}
\end{center}
%\vspace{-0.3cm}
\end{figure*}
}

\begin{figure*}
\begin{center}
\includegraphics [width=0.9\linewidth]{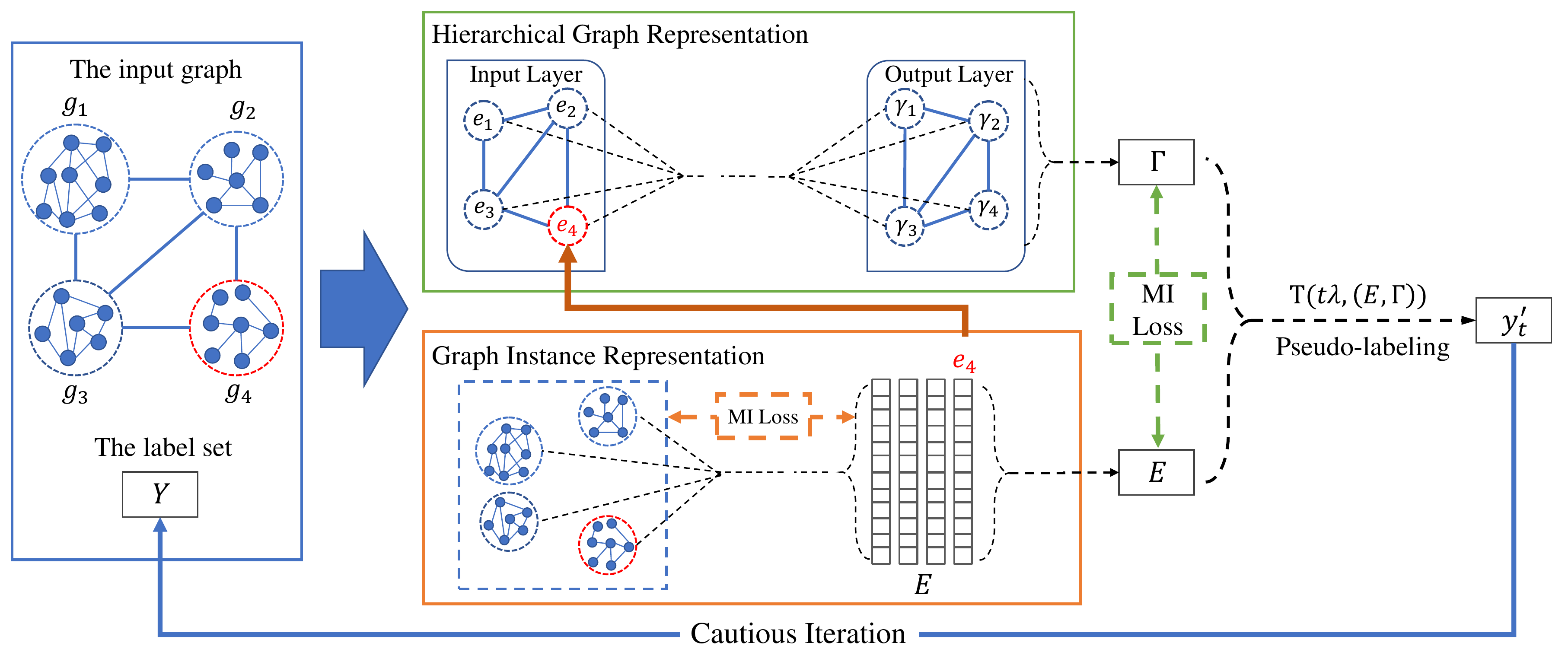}
\caption{\textcolor{black}{Schematic diagram of the learning framework SEAL-CI. There are two subroutines: graph instance representation (in the orange box) and hierarchical graph representation (in the green box).}}
\label{fig.em}
\end{center}
\vspace{-0.3cm}
\end{figure*}

We use Figure \ref{fig.mi} to illustrate the idea to compute HGMI. One naive method to compute HGMI is according to the definition, which inevitably involves with the computation of joint distribution $P(G,E,\Gamma)$ and is not tractable in general \cite{paninski2003estimation}. With Theorem \ref{thm:submod}, we decouple the computation of HGMI into a sum of the computation of graph mutual information without hierarchies, which is tractable as used in many previous works \cite{velivckovic2018deep}\cite{peng2020graph}. Next we present the design of IC and HC classifier, and then give the detailed way to maximize HGMI.

\subsection{Design of Classifiers}\label{demCOCC}
Classifier IC takes a graph instance as input.  As different graph instances have different numbers of nodes, IC is expected to handle graph instances of arbitrary size. Classifier HC takes the hierarchical graph as input, in which individual graph instances are the ``nodes''.  \eat{This is a much too complicated input for a classifier.  To deal with the above challenges,}We propose to embed a graph instance $g_i\in G$ into a fixed-length vector $e_i$ via IC first.  Then HC can take as input the embedding vectors of graph instances and the adjacency matrix $\mathcal{A}$. In particular, IC takes as input the adjacency matrix $A_i$ and attribute matrix $X_i$ of an arbitrary-sized graph instance $g_i$, and outputs an instance representation $e_i$, i.e., $e_i = \text{IC}(A_i, X_i)$.  HC takes the instance representations $E =\{e_i\}_{i=1}^{L+U}$ and $\mathcal{A}$, and outputs the predicted hierarchical graph representations $\Gamma = \{\gamma_i\}_{i=1}^{L+U}$, i.e., $\Gamma = \text{HC}(E, \mathcal{A})$.  In the following, we illustrate the design of IC which performs instance graph representation, and then the design of HC which performs hierarchical graph representation.

\subsubsection{Graph instance representation}\label{dem}

The task of graph instance representation is to produce a fixed-length embedding vector of a graph instance, for which, however, we identify two challenges:

\begin{itemize}
\item  \emph{Size invariance}: How to design the network structure to flexibly take an arbitrary-sized graph instance and produce a fixed-length embedding vector?
\item  \emph{Permutation invariance}: How to derive the representation regardless of the permutation of nodes?
\eat{\item  \emph{Node importance}: How to encode the importance of different nodes into a unified embedding vector?}
\end{itemize}

\eat{In particular, the third challenge is \emph{node importance}, i.e., different nodes in a graph instance have different degrees of importance.  For example, in a ``game'' group the ``core'' members should be more important than the ``border'' members in contributing to the derived embedding vector.  We need to design a mechanism to learn the node importance and then encode it in the embedding vector properly.}

\heng{While the concept of size invariance is straightforward, we specifically provide the definition of permutation invariance on graphs here:}
\begin{definition}[Permutation Invariance on graphs]\label{def.PI}
  Let $P \in \{0,1\}^{n \times n}$ be any permutation matrix of order $n$, then the permutation to a graph instance $g$ is defined as a mapping of the node indices, i.e.:
  \begin{equation*}
    Pg = (PAP^\top, PX).
  \end{equation*}
  Then a function $f$ is invariant to permutation on $g$ if
  \begin{equation*}
    f(A, X) = f(PAP^\top, PX).
  \end{equation*}
\end{definition}

Recently \emph{graph pooling} has emerged as a research topic that can be used to tackle the above challenges. Some common practices include DIFFPOOL \cite{ying2018hierarchical}, Attention-based Pool \cite{jiawww19}\cite{lee2019self}\cite{baek2021accurate}, MinCut-Pool \cite{bianchi2020mincutpool}. In particular, \cite{baek2021accurate} shows Attention-based Pool is permutation-invariant by connecting with 
Weisfeiler-Lehman test. To this end, in IC, we first utilize a multi-layer Graph Neural Network (GNN)~\cite{kipf2017semi} to smooth each node's features over the graph's topology.  Then we adopt a \emph{graph pooling} method  and transform a variable number of smoothed nodes into a fixed-length graph instance representation $e$.

\eat{To this end, we propose a self-attentive graph embedding method, called SAGE, which can take a variable-sized graph instance, and combine each node to produce a fixed-length vector according to their importance within the graph.  In SAGE, we first utilize a multi-layer Graph Neural Network (GNN)~\cite{kipf2017semi} to smooth each node's features over the graph's topology.  Then we use a self-attentive mechanism to learn the node importance and then transform a variable number of smoothed nodes into a fixed-length embedding vector, as proposed in ~\cite{DBLP:journals/corr/LinFSYXZB17}.  Finally, the embedding vector is cascaded with a fully connected layer and a softmax function, in which the label information can be leveraged to discriminatively transform the embedding vector $e$ into $\psi$.  Figure~\ref{fig.SAGE} depicts the overall framework of SAGE.}

Formally, we are given the adjacency matrix $A \in \mathbb{R}^{n \times n}$ and the attribute matrix $X \in \mathbb{R}^{n \times d}$ of a graph instance $g$ as inputs.
We apply a multi-layer GNN network:
\begin{equation}
  H = \text{GNN}(A,X),
\label{equ.H}
\end{equation}
\eat{In the preprocessing step, the adjacency matrix $A$ is normalized:
\begin{equation}
  \hat{A} = \tilde{D}^{-\frac{1}{2}}(A + I_n)\tilde{D}^{-\frac{1}{2}},
\label{equ.gene}
\end{equation}
where $I_n$ is the identity matrix and $\tilde{D}_{ii} = \sum_m\ (A + I_n)_{im}$.  Then we apply a two-layer GCN network:
\begin{equation}
  H = \hat{A}\ \text{ReLU}(\hat{A}XW^0)W^1.
\label{equ.H}
\end{equation}

Here $W^0 \in \mathbb{R}^{\phi \times h}$ and $W^1 \in \mathbb{R}^{h \times v}$ are two weight matrices.}
here we get a set of representation $H \in \mathbb{R}^{n \times v}$ for nodes in $g$.  Note that the representation $H$ is size variant, i.e., its size is still determined by the number of nodes $n$.  So next we utilize the \emph{graph pooling} mechanism:
\begin{equation}
e = \text{Pooling(H)},
\end{equation}
here consider Attention-based Pool is adopted, as the attention weight is used to multiply with $H$ and flatten the representation, $e $ is size invariant and does not depend on the number of nodes $n$ any more, \heng{which solves the first challenge.}
\eat{GCN can be considered as a Laplacian smoothing operator for node features over graph structures, as pointed out in~\cite{DBLP:journals/corr/abs-1801-07606}.  Then we get a set of representation $H \in \mathbb{R}^{n \times v}$ for nodes in $g$.  Note that the representation $H$ does not provide node importance, and it is size variant, i.e., its size is still determined by the number of nodes $n$.  So next we utilize the self-attentive mechanism to learn node importance and encode it into a unified graph representation, which is size invariant:
\begin{equation}
  S = \textsf{softmax} \big(W_{s2}\textsf{tanh}(W_{s1}H^T)\big),
\label{equ.gene}
\end{equation}
where $W_{s1} \in \mathbb{R}^{d \times v}$ and $W_{s2} \in \mathbb{R}^{r \times d}$ are two weight matrices.  The function of $W_{s1}$ is to linearly transform the node representation from a $v$-dimensional space to a $d$-dimensional space, then nonlinearity is introduced by tying with the function \textsf{tanh}. $W_{s2}$ is used as $r$ views of inferring the importance of each node within the graph. It acts like inviting $r$ experts to give their opinions about the importance of each node independently. Then \textsf{softmax} is applied to derive a standardized importance of each node within the graph, which means in each view the summation of all the node importance is 1.

After that, we compute the final graph instance representation $e \in \mathbb{R}^{r \times v}$ by multiplying $ S\in \mathbb{R}^{r \times n}$ with $H \in \mathbb{R}^{n \times v}$:
\begin{equation}
e = SH.
\end{equation}
$e$ is size invariant since it does not depend on the number of nodes $n$ any more.  It is also permutation invariant since the importance of each node is learned regardless of the node sequence, and only determined by the task labels.

One potential risk in SAGE is that $r$ views of node importance may be similar. To diversify their views of node importance, a penalization term is imposed:
\begin{equation}
  %P = \big|\big|\SS^T - I_r\big|\big|_F^2.
  P = \big|\big|\ SS^T - I_r\ \big|\big|_F^2.
\end{equation}
Here $\big|\big|\cdot\big|\big|_F$ represents the Frobenius norm of a matrix.  We train the classifier in a supervised way with the task at hand, in the hope of minimizing both the penalization and the cross-entropy loss.
}

\heng{
While for the second challenge of the node permutations, the following positive Proposition shows that our constructed instance-level classifier IC is permutation invariant as long as the selected component graph pooling method and GNN are permutation invariant:}
\begin{proposition}
Let $P \in \{0,1\}^{n \times n}$ be any permutation matrix, then $IC(A,X) = IC(PAP^\top, PX)$ as long as the chosen graph pooling method and GNN are permutation invariant.
\end{proposition}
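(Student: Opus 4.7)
The plan is to reduce the claim to the compositional structure of IC and propagate the permutation through each component in turn.

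First, I would unfold the definition of IC given by Equation~\eqref{equ.H} and the subsequent pooling step into a single composition $IC(A, X) = \text{Pooling}(\text{GNN}(A, X))$. Under this view, the identity required by Definition~\ref{def.PI} becomes a question of how this composition behaves under the substitution $A \mapsto PAP^\top$, $X \mapsto PX$, and the proof reduces to how the inner map and the outer map each respond to $P$.

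Second, I would clarify the intended reading of the hypothesis, since this is the only subtle point. For the composition to produce a graph-level invariant, what is actually required is that the GNN be node-level permutation \emph{equivariant},
\begin{equation*}
\text{GNN}(PAP^\top, PX) = P\,\text{GNN}(A, X),
\end{equation*}
which is the standard property of message-passing GNNs of the Kipf--Welling type used here, together with row-permutation invariance of the pooling layer, $\text{Pooling}(PH) = \text{Pooling}(H)$, which is what \cite{baek2021accurate} establishes for the attention-based pooling adopted just above the proposition. Chaining these two component-level facts then yields
\begin{equation*}
IC(PAP^\top, PX) = \text{Pooling}\bigl(P\,\text{GNN}(A, X)\bigr) = \text{Pooling}\bigl(\text{GNN}(A, X)\bigr) = IC(A, X),
\end{equation*}
which is the desired identity.

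The main (really the only) obstacle is linguistic: the proposition labels both components as ``permutation invariant,'' whereas the useful property of the GNN is actually equivariance. Once this distinction is made explicit the argument is pure substitution into a two-step composition and requires no further calculation. If one wanted a fully self-contained version, the GNN equivariance could be checked layer by layer by observing that $(PAP^\top + I_n) = P(A + I_n)P^\top$ and that the associated degree matrix transforms as $P\tilde{D}P^\top$, so each graph convolution applied to the permuted input returns the original output permuted by $P$; the argument then closes by induction on the number of GNN layers.
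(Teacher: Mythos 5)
Your proposal is correct and follows essentially the same route as the paper: decompose $IC$ as pooling composed with a GNN and push the permutation through each component. The one substantive difference is that your version is actually \emph{more} precise than the paper's. The paper's proof literally asserts $\text{GNN}(A,X) = \text{GNN}(PAP^\top, PX)$ as an equality of node-representation matrices, which is false for any message-passing GNN (including GIN, the paper's cited example): permuting the node indices permutes the rows of $H$, so the correct statement is the equivariance $\text{GNN}(PAP^\top, PX) = P\,\text{GNN}(A,X)$ that you wrote. What makes the paper's conclusion survive is exactly the point you isolate — the attention pooling $e = SH$ with $S = \textsf{softmax}(W_{s2}\textsf{tanh}(W_{s1}H^\top))$ satisfies $S \mapsto SP^\top$ under $H \mapsto PH$, so $e = SP^\top PH = SH$ is invariant to row permutations of $H$. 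Your explicit separation of GNN equivariance from pooling invariance, plus the optional layer-by-layer induction via $P(A+I_n)P^\top$, is the cleaner and fully rigorous version of the argument the paper intends.
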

\begin{proof}
Consider a permutation invariant GNN, e.g., GIN \cite{xu2018powerful}, we have:
\begin{equation}\label{equ.gnn-PI}
  H = \text{GNN}(A,X) = \text{GNN}(PAP^\top, PX).
\end{equation}
For pooling operation, consider the permutation invariant Attention-based Pooling function used in ~\cite{jiawww19}:
\begin{equation}\label{equ.att-pool}
  S = \textsf{softmax} \big(W_{s2}\textsf{tanh}(W_{s1}H^T)\big), \quad e = SH,
\end{equation}
where $W_{s1}$ and $W_{s2}$ are two weight matrices. $H$ is the same after nodes permutation from Eq.~\eqref{equ.gnn-PI} and our example pooling function Eq.~\eqref{equ.att-pool} is only relevant to $H$, which reveals the permutation invariance of IC. Thus we finish the proof.
\end{proof}
\eat{
% The proof is trivial and we omit is here. 
Since the commonly used graph pooling methods and GNNs are usually permutation invariant, the permutation invariance of IC can be easily guaranteed during practice.
}

To summarize, we use GNN and \emph{graph pooling} that are permutation invariant to construct the instance-level classifier IC.  It produces graph instance representations $E = \{e_i\}_{i=1}^{L+U}$ while successfully dealing with the aforementioned two challenges, which is the input for classifier HC in the next.

\subsubsection{Hierarchical graph representation}\label{ssc}
Given $E$ and the adjacency matrix $\mathcal{A} \in \mathbb{R}^{(L+U) \times (L+U)}$, our next task is to infer the parameters of classifier HC and derive the predicted probabilities $\Gamma = \{\gamma_i\}_{i=1}^{L+U}$. This problem falls into the setting of non-hierarchical graph setting where $E$ can be treated as the set of node features. \eat{ Recently neural network based approaches such as~\cite{kipf2017semi,yang2016revisiting} have demonstrated their superiority to traditional methods such as ICA~\cite{sen2008collective}.  In this context we make use of GCN~\cite{kipf2017semi} again for the consideration of efficiency and effectiveness.} In this context, we consider again a multi-layer GNN. Thus the model becomes:
\begin{equation}
  \Gamma = HC(E,\mathcal{A}) = \text{softmax}(\text{GNN} (E,\mathcal{A})),
\label{Eq:GCN}
\end{equation}
where $\Gamma \in \mathbb{R}^{(L+U) \times c}$ is the derived hierarchical graph representation.  With $\Gamma$ and $E$, next we introduce the way to maximize HGMI.

\subsection{Maximization of HGMI}\label{sec.max}
With Theorem \ref{thm:submod}, the computation of HGMI is reduced into MI between graph input  and graph-level representation ($I(G;E)$) and MI between graph input and node-level representation ($I(E;\Gamma)$), in which we have many choices such as INFOGRAPH \cite{sun2019infograph} and GMI \cite{peng2020graph}. 

To compute $I(G;E)$, we use the method of INFOGRAPH \cite{sun2019infograph}, where MI is computed between the graph input $G$ and graph-level representation $E$. More specifically, it is shown in \cite{sun2019infograph} that maximizing the global $I(G;E)$ can be estimated by MI between node representations and graph-level representations,
\begin{equation}
   I(G;E) = \sum_i^{L+U}\frac{1}{L+U}\sum_j^{n}\frac{1}{n}I(h_{ij};e_i),
   \label{hgmi_i}
\end{equation}
where $h_{ij}$ is a node representation sampled from $H_i$.
The core thus becomes how we compute $I(h_{ij};e_i)$. In this work, we resort to Jensen-Shannnon MI estimator (JSD),
\begin{equation}
  I(h_{ij};e_i) =  - sp(-\mathbb{D_I}(h_{ij},e_i) -  \mathbb{E}_{\hat{h}_{ij}}sp (\mathbb{D_I}(\hat{h}_{ij},e_i)),
\label{equ.Hz}
\end{equation}
where $\mathbb{D_I}: \mathbb{R}^v \times \mathbb{R}^m \rightarrow \mathbb{R}$ is a discriminator constructed by a neural network with parameters $D_I$, i.e., \textcolor{black}{$\mathbb{D_I}(h_{ij},e_i) = \sigma(h_{ij}^{\top}W_{D_I}e_i)$ and $W_{D_I}$ is a learnable scoring matrix}. $\hat{h}_{ij}$ is a negative example. $sp(x) = \log(1+\exp(x))$ is the soft-plus function. For a detailed graph-level negative sampling process, please see INFOGRAPH \cite{sun2019infograph}.

To compute $I(E;\Gamma)$, we adopt the method of GMI \cite{peng2020graph}, where MI is computed between the graph feature input and node-level representation. More specifically, it is shown in  \cite{peng2020graph} that, for a node-level representation $\gamma_i$, maximizing the global $I(E;\gamma_i)$ can be decomposed as a weighted sum of local MIs,
\begin{equation}
   I(E;\gamma_i) = \sum_j^{L+U}w_{ij}I(e_j;\gamma_i),
\label{hgmi_h}
\end{equation}
here we adopt the mean version of GMI, meaning $w_{ij} = \frac{1}{L+U}$. To compute $I(e_j;\gamma_i)$, we use JSD again,
\begin{equation}
  I(e_j;\gamma_i) = - sp(-\mathbb{D_H}(e_j,\gamma_i) -  \mathbb{E}_{\hat{e}_j}sp (\mathbb{D_H}(\hat{e}_j,\gamma_i)),
\label{equ.Hsed}
\end{equation}
where $\mathbb{D_H}: \mathbb{R}^m \times \mathbb{R}^c \rightarrow \mathbb{R}$ is a discriminator constructed by a neural network with parameters $D_H$. $\hat{e}_j$ is a negative example. For a detailed node-level negative sampling process, please see GMI \cite{peng2020graph}.
\subsection{The Proposed SEAL-CI Model}\label{sec.method}
In this subsection, we present our method to minimize the objective function Eq.~\eqref{equ.total}. A naive way would be directly minimizing Eq.~\eqref{equ.total} in an end-to-end fashion with the limited labels. \textcolor{black}{This algorithm is called \underline{SE}mi-supervised gr\underline{A}ph c\underline{L}assification (SEAL).} However, in real-world scenarios, the number of labeled graph instances $L$ can be quite small compared to the number of unlabeled instances $U$.  In this context, neural network based classifiers may suffer from the problem of overfitting.

Following previous works \cite{mcdowell2007cautious}\cite{sen2008collective}, we use the idea of iterative algorithm to alternate optimizing the two modules of IC and HC by trusting a subset of predictions.
To be more specific, we combine the instance representation in Section \ref{dem} and hierarchical representation in Section \ref{ssc} into one iterative algorithm.  We build IC to produce instance representation $E^t$ for all graph instances in iteration $t$, and then feed $E^t$ into HC to get the predicted probabilities $\Gamma^t$.  We then make use of pseudo-labeling strategy to update the parameters of IC and generate $E^{t+1}$, which is then used as the input of HC in iteration $t+1$.  Figure~\ref{fig.em} depicts the overall framework of this iterative process. 

\begin{algorithm}[t]
  \caption{SEAL-CI}
  \label{algo.naive}
  \KwIn{$\mathcal{G}=\{G,\mathcal{A}\}$.}
  \KwOut{$\Gamma^{t+1}$,$E^{t+1}$.}
  Initial: $G_{tmp} =\emptyset$, $G_l^0 = G_l$, $t = 0$;

  \While{$t\lambda \leq U$}{
    $\mathcal{W}^{t+1} \leftarrow$ $\arg\min\zeta(G_l^{t}| \mathcal{W}^{t}) - \xi(\mathcal{G}^t| \mathcal{W}^{t})$\;

    $ E^{t+1} \leftarrow$ \textsf{IC}($A, X | \mathcal{W}^{t+1}$)\;
    $ \Gamma^{t+1} \leftarrow$ \textsf{HC}($E^{t+1}, \mathcal{A} | \mathcal{W}^{t+1}$)\;
    %\While{$|G_{tmp}| < \lambda$ and $G_u^{t} \neq \emptyset$}{
    %    $g_i = \arg\max_{g_i \in G_u^t, \gamma_{g_i} \in \Gamma^{t+1}}\gamma_{g_i}$\;
    %    $G_{tmp} \leftarrow g_i$\;
    %    $G_u^{t} \leftarrow G_u^{t} \setminus g_i$\;
    %}
	$G_{tmp} \leftarrow T(t\lambda,(E_{G_u}^{t+1},\Gamma_{G_u}^{t+1}) )$\;
	%$G_u^{t+1} \leftarrow G_u^{t} \setminus G_{tmp}$\;
    $G_l^{t+1} \leftarrow G_l \cup G_{tmp}$\;
    %$G_u^{t} \leftarrow G_u^{t+1}$\;
	$G_{tmp} =\emptyset$\;
	%$t \leftarrow t + 1$\;
  }
  Return  $\Gamma^{t+1}$,$E^{t+1}$\;
\end{algorithm}

\subsubsection{Pseudo-labeling}
To update the instance representations, a naive approach is feeding the whole set of ($E^t$,$\Gamma^t$) as pseudo labels for the parameter update in IC, which is the idea of the original ICA~\cite{sen2008collective}.  However, not all ($E^t$,$\Gamma^t$) are correct in their predictions.  The false predictions may lead the update of embedding neural network to the wrong direction. Within the set of unlabeled graphs, different ($E^t$,$\Gamma^t$) could contribute differently to the update of embedding neural network. To this end, we make use of the idea of \emph{cautious iteration} \cite{mcdowell2007cautious} and \emph{curriculum learning} \cite{bengio2009curriculum}, and cautiously exploit a subset of ($E^t$,$\Gamma^t$) to update the parameters of IC in each iteration.  Specifically, in iteration $t$, we choose the $t\lambda$ most confident predicted labels in both IC and HC while ignoring the less confident predicted labels.  This operation continues until all the unlabeled samples have been utilized or until a predefined iteration number.  \eat{To further improve the efficiency, the parameters of IC are not re-trained but fine-tuned based on the parameters obtained in the previous iteration.}  This algorithm is called \underline{SE}mi-supervised gr\underline{A}ph c\underline{L}assification via \underline{C}autious \underline{I}teration (SEAL-CI) and is presented in Algorithm ~\ref{algo.naive}.  Note here $\mathcal{W}$ is the set of all the parameters of IC and HC. \textcolor{black}{In line $3$, SEAL is well trained to provide a foundation for SEAL-CI to find informative and accurate pseudo labels thereafter}. In line $6$, the training set for IC has been enlarged by $t\lambda$ instances and it is done by ``committing'' these instances' labels from their maximum probability.  In other words, the newly enrolled training instances are found by:
\begin{equation}
  T(t\lambda,(E,\Gamma)) = \textsf{top}(\max((E,\Gamma)),t\lambda),
\end{equation}
here function $\textsf{top}(\cdot,\lambda)$ is used to select the top $t\lambda$ instances and function $\max(\cdot)$ is used to select the maximum value in a set. 

\subsubsection{\textcolor{black}{Analysis of SEAL-CI}}\label{aoci}
\textcolor{black}{
In this part, we discuss about the convergence properties of proposed SEAL-CI algorithm. 
We first show the empirical risk is a monotonically decreasing function with respect to the iteration step.}
\begin{theorem}
    \emph{Let $\zeta(\mathcal{W})$ be the empirical risk, $\mathcal{W}^t(t=1,2,\cdots)$ be the parameter set of algorithm \ref{algo.naive} at iteration step $t$, $\zeta(\mathcal{W}^t)$ be the empirical risk at the corresponding iteration step. Then $\zeta(\mathcal{W}^t)$ is a monotonically decreasing function, namely,
    \begin{equation}
    \zeta(\mathcal{W}^{t+1}) \leq \zeta(\mathcal{W}^t),
    \end{equation}
}
    here $\zeta(\mathcal{W}^t)$ is short for $\zeta(G_l|\mathcal{W}^t)$ and the size of $G_l$ will be dynamically increased as we add pseudo-labeled instances. 
    \label{thm:mono}
\end{theorem}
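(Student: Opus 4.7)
The plan is to control the empirical risk across one iteration of Algorithm~\ref{algo.naive} by separating the two sources of variation between steps $t$ and $t+1$: (i) the joint optimization on line~3 that produces $\mathcal{W}^{t+1}$ from the current labeled set $G_l^t$, and (ii) the cautious pseudo-label enlargement on lines~6--7 that produces $G_l^{t+1} = G_l^t \cup G_{tmp}$. Writing the new empirical risk as a convex combination,
\[
\zeta(\mathcal{W}^{t+1}) \;=\; \frac{|G_l^t|\,\zeta(G_l^t\mid\mathcal{W}^{t+1}) \;+\; |G_{tmp}|\,\zeta(G_{tmp}\mid\mathcal{W}^{t+1})}{|G_l^t| + |G_{tmp}|},
\]
it suffices to show $\zeta(G_l^t\mid\mathcal{W}^{t+1}) \le \zeta(\mathcal{W}^t)$ and $\zeta(G_{tmp}\mid\mathcal{W}^{t+1}) \le \zeta(\mathcal{W}^t)$, since any convex combination of two quantities bounded by $\zeta(\mathcal{W}^t)$ is itself bounded by $\zeta(\mathcal{W}^t)$.

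For the first piece, I would invoke optimality of $\mathcal{W}^{t+1}$ on line~3: evaluating the composite objective at the feasible point $\mathcal{W}^t$ yields $\zeta(G_l^t\mid\mathcal{W}^{t+1}) - \xi(\mathcal{G}^t\mid\mathcal{W}^{t+1}) \le \zeta(G_l^t\mid\mathcal{W}^t) - \xi(\mathcal{G}^t\mid\mathcal{W}^t)$. To isolate $\zeta$ from the mutual-information term, I would either (a) treat the joint update as an alternating procedure whose supervised substep alone yields $\zeta(G_l^t\mid\mathcal{W}^{t+1}) \le \zeta(G_l^t\mid\mathcal{W}^t) = \zeta(\mathcal{W}^t)$, or (b) appeal to the HGMI maximization machinery of Section~\ref{sec.max} to guarantee $\xi(\mathcal{G}^t\mid\mathcal{W}^{t+1}) \ge \xi(\mathcal{G}^t\mid\mathcal{W}^t)$, so that the two $\xi$ terms cancel favourably.

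For the second piece, I would use the structure of the pseudo-label selector $T(t\lambda,(E,\Gamma)) = \textsf{top}(\max((E,\Gamma)),t\lambda)$. For every $g_i \in G_{tmp}$, the committed label is by construction the argmax of the predicted distribution, so each cross-entropy contribution in Eq.~\eqref{equ.super} collapses to $-\log(\max_k e_{ik}) - \log(\max_k \gamma_{ik})$. Because $G_{tmp}$ is the top-$t\lambda$ block under the confidence ordering, these two logs are uniformly small; more carefully, the well-trained SEAL initialization on line~3 ensures that the confidence on $G_{tmp}$ exceeds the minimum confidence on $G_l^t \setminus G_l^0$, so the per-instance loss on $G_{tmp}$ is dominated by the average loss on $G_l^t$, i.e.\ $\zeta(G_{tmp}\mid\mathcal{W}^{t+1}) \le \zeta(G_l^t\mid\mathcal{W}^{t+1}) \le \zeta(\mathcal{W}^t)$.

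The main obstacle I expect is disentangling the supervised loss from the $-\xi$ term in the joint objective, since the argmin of a difference does not by itself imply a one-sided inequality on either summand. The cleanest workaround is to formalize the line~3 update as an alternating minimization whose supervised substep is explicitly monotone in $\zeta$, and then to state the pseudo-label bound with a mild assumption, such as a positive confidence margin on $G_{tmp}$ induced by the cautious threshold $t\lambda$, that guarantees $\zeta(G_{tmp}\mid\mathcal{W}^{t+1}) \le \zeta(\mathcal{W}^t)$ exactly rather than just in practice. With these two ingredients the convex-combination bound above closes the proof.
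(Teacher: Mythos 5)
Your proposal is essentially correct and rests on the same key mechanism as the paper's proof, but packages it differently. The paper fixes $\lambda=1$, writes the enlarged risk as in Eq.~\eqref{pf}, and then uses the covariance identities \eqref{eta}--\eqref{ete} to express the cross-entropy of the newly committed instance as $(L+U)\,\text{Cov}[\text{CE}(y'_t,\cdot),T]$ plus the mean cross-entropy over the existing labeled set; monotonicity then follows from asserting that the selection score $T$ is negatively correlated with the cross-entropy of the selected instance, so the covariance terms in Eq.~\eqref{Lp1} are nonpositive. That is algebraically the same statement as your claim that the loss of the top-confidence instance is dominated by the average loss on $G_l^t$, which you reach via a convex-combination decomposition instead of a covariance identity: both say ``the added instance sits below the mean,'' and both ultimately justify it by the same qualitative argument (the committed label is the argmax of a high-confidence prediction, so its cross-entropy is small). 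Where you genuinely diverge is in the treatment of the parameter update: you correctly identify that line~3 of Algorithm~\ref{algo.naive} minimizes $\zeta-\xi$ rather than $\zeta$ alone, so optimality of $\mathcal{W}^{t+1}$ does not by itself give $\zeta(G_l^t\mid\mathcal{W}^{t+1})\le\zeta(G_l^t\mid\mathcal{W}^t)$, and you propose an alternating-minimization substep or a monotone-$\xi$ assumption to close that gap. The paper sidesteps this entirely by writing $\frac{1}{L}(\text{CE}(Y,E_l)+\text{CE}(Y,\Gamma_l))=\zeta(\mathcal{W}^t)$ in Eq.~\eqref{Lp1}, i.e., it implicitly holds the loss on the old labeled set fixed across the parameter update. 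Your route is therefore more explicit about what must actually be assumed (a supervised substep monotone in $\zeta$, plus a confidence margin on $G_{tmp}$), while the paper's covariance formulation buys a shorter derivation at the cost of leaving both assumptions implicit, the second hidden inside the sign-of-correlation argument. Neither version is fully rigorous at the crucial step, but your decomposition isolates the two gaps cleanly rather than merging them.
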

The {\itshape cautious selection function $T$} in the pseudo labeling process provides a Bayesian prior for incorporating unlabeled instances, which can be formalized as follows:
\begin{equation}\label{pf}
\begin{aligned}
   \zeta(\mathcal{W}^{t+1}) = \frac{1}{L+1}(& \text{CE}(Y,E_l)+\text{CE}(Y,\Gamma_l) + \text{CE}(y'_t,e_t) \\
   & + \text{CE}(y'_t,\gamma_t)),
\end{aligned}
\end{equation}
here  $E_l = \{e_i\}_{i=1}^L$ and $\Gamma_l = \{\gamma_i\}_{i=1}^L$, $\text{CE}(Y,E_l) = \sum_{g_i\in G_l}\text{CE}(y_i,e_i)$, $\text{CE}(\cdot, \cdot)$ is the cross-entropy function. $y'_t$ denotes the pseudo label that we incorporate with $T(e_i,\gamma_i)$ and $\lambda =1$, i.e., we consider increasing the size of training set by one at each iteration. Let $T$ be short for $T(e_i,\gamma_i)$, we have
\begin{equation}\label{eta}
   \text{CE}(y_t,\gamma_t) = (L+U)\text{Cov}[\text{CE}(y'_t,\gamma_t),T] + \mathbb{E}[\text{CE}(Y,\Gamma_l)],
\end{equation}
\begin{equation}\label{ete}
\text{CE}(y_t,e_t) = (L+U)\text{Cov}[\text{CE}(y'_t,e_t),T] + \mathbb{E}[\text{CE}(Y,E_l)].
\end{equation}
Based on Eq.~\eqref{eta} and Eq.~\eqref{ete}, we can rewrite $\zeta(\mathcal{W}^{t+1})$:
\begin{equation}\label{Lp1}
\begin{aligned}
    \zeta(\mathcal{W}^{t+1}) &= \beta\{\text{Cov}[\text{CE}(y'_t,e_t),T]+\text{Cov}[\text{CE}(y'_t,\gamma_t),T]\}  \\
   &\;\;\;\;+\frac{1}{L}(\text{CE}(Y,E_l) +\text{CE}(Y,\Gamma_l))\\
   %&\;\;\;\;+ \text{CE}(y_t,\gamma_t)) \\
   &=\beta\{\text{Cov}[\text{CE}(y'_t,e_t),T]+\text{Cov}[\text{CE}(y'_t,\gamma_t),T]\}\\
   &\;\;\;\;+ \zeta(\mathcal{W}^t) \leq \zeta(\mathcal{W}^t),\\
   %&\leq \zeta(\mathcal{W}^t)
\end{aligned}
\end{equation}
here $\beta = \frac{L+U}{L+1}$. The last inequality holds as the labeling strategy $T$ is similar to a discrete sign function. It is easy to see the positive correlation between $\gamma_t$ and $T$. Besides, $\gamma_t$ and the pseudo labels $y'_t$ are positively correlated. As cross entropy term $\text{CE}(y'_t,\gamma_t)$ is negatively correlated to $\gamma_t$, the correlation between $T$ and $CE(y'_t,\gamma_t)$ is negative. Similarly, the correlation between $T$ and $CE(y'_t,e_t)$ is negative. We thus finish the proof of Theorem \ref{thm:mono}.

\textcolor{black}{As the empirical risk is a monotonically decreasing function and bounded (i.e., non-negative), we conclude that the proposed SEAL-CI algorithm has a convergence property.}

\eat{
\subsection{The Proposed SEAL-AI Model}\label{sec.ai}
Our proposed model is easy to extend to the active learning scenario.  In case further annotation is available, we can perform active learning and ask for annotations with a budget of $B$.  Denote the set of graph instances being annotated as $G_B$, then the objective function in the active learning setting is re-written as:
\begin{equation}
\begin{split}
  \min f(G|B, \mathcal{W})\\
  \text{s.t.}\ \ \ |G_B| \leq B,
\end{split}
\label{equ.ctotal}
\end{equation}
where $f(G|B, \mathcal{W}) = \zeta(G_l \cup G_B | \mathcal{W})-\xi(G_u \setminus G_B | \mathcal{W})$.  This is a mixed combinatorial and continuous optimization problem.  It is very hard to infer the model parameters and the active learning set $G_B$ simultaneously.  By definition, the active learning set $G_B$ is intractable unless the model parameters are completely inferred.  To solve this chicken-and-egg problem, we decompose the objective function into two sub-steps: parameter optimization and candidate generation. Then we optimize $f(G|B, \mathcal{W})$ iteratively.  This algorithm is called \underline{SE}mi-supervised gr\underline{A}ph c\underline{L}assification via \underline{A}ctive \underline{I}teration (SEAL-AI) and is shown in Algorithm~\ref{algo.set}.

\begin{algorithm}[t]
  \caption{SEAL-AI}
  \label{algo.set}
  \KwIn{$\mathcal{G}=\{G,\mathcal{A}\}$.}
  \KwOut{$\Gamma^{t+1}$.}
  %Initial: $E^0 \leftarrow$ graph2vec \;
  Initial: $G_{tmp} =\emptyset$,$G_B^0 = \emptyset$, $G_l^0 = G_l$, $G_u^0 = G_u$, $t = 0$;

  \While{$|G_B^t|$ $\leq$ $B$ }{
    %// Parameter optimization

    $\mathcal{W}^{t+1} \leftarrow$ $\arg\min\zeta(G_l^{t}| \mathcal{W}^{t})- \xi(\mathcal{G}^t| \mathcal{W}^{t})$\;
	$ \Psi^{t+1}, E^{t+1} \leftarrow$ \textsf{IC}($A, X | \mathcal{W}^{t+1}$)\;
    $ \Gamma^{t+1} \leftarrow$ \textsf{HC}($E^{t+1}, \mathcal{A}| \mathcal{W}^{t+1}$)\;
    %// Candidate generation

    $G_{tmp}$ $\leftarrow$ $\arg\max_{|G_{tmp}| = k}\xi(G_u^{t} \setminus G_{tmp} | \mathcal{W}^{t+1})$\;
    $G_B^{t+1} \leftarrow G_B^{t} \cup G_{tmp}$\;
    $G_l^{t+1} \leftarrow G_l^{t} \cup G_{tmp}$\;
    $G_u^{t+1} \leftarrow G_u^{t} \setminus G_{tmp}$\;
	$G_{tmp} =\emptyset$\;
  }
  Return $\Gamma^{t+1}$\;
\end{algorithm}

At the beginning of this iterative process, we optimize the supervised loss $\zeta(G_l| \mathcal{W})$ and HGMI based on current labeled graphs in $G_l$ (line 3 in Algorithm~\ref{algo.set}). In active learning, the choice of candidate generator is a key component.  We exploit the idea of \cite{bilgic2010active}\cite{guo2007optimistic} and choose the candidate graph instances $G_{tmp}$ by maximizing the current HGMI based on the new parameter obtained in the first step (line 6 in Algorithm~\ref{algo.set}).  At last we label $G_{tmp}$ and update $G_B$, $G_l$ and $G_u$ respectively (line 7-9 in Algorithm~\ref{algo.set}). 

Formally, we follow the ``most uncertain" candidate selection criteria in active learning to reduce the model uncertainty\cite{guo2007optimistic,chen2013near,chen2015sequential}, and select the instances whose annotations will result in the maximum mutual information among the unlabeled graph instances:
\begin{equation}
  \arg \max \xi(G_u \setminus G_B | \mathcal{W})
\end{equation}
Here HGMI is adopted and can be decomposed into linear combinations of individual MIs ($I(h_{ij};e_i)$ and $I(e_j;\gamma_i)$), by Theorem \ref{thm:submod}, Eq.~\eqref{hgmi_i} and Eq.~\eqref{hgmi_h}. Thus, we can choose the candidates by:
\begin{equation}
  z_i = \sum_j^n\frac{1}{n} I(h_{ij};e_i) + \sum_j^{L+U}\frac{1}{L+U} I(e_j;\gamma_i).
\end{equation}
Then we choose $k$ instances with the smallest values. For the consideration of efficiency, $I(e_j;\gamma_i)$ can be computed within one hop of instance $g_i$.
\eat{Intuitively, the KL divergence between $\psi_i$ and $\gamma_i$ can be viewed as the conflict of two supervised models.  A large KL divergence indicates that one of the models gives wrong predictions. To this end, the instances with a large KL divergence are more informative to help the algorithm converge more quickly.}

\subsubsection{Correlation with information-based feature selection}
To illustrate the legitimacy of our proposed SEAL-AI model, we introduce the correlation between the disagreement score used in SEAL-AI and the information-based feature selection. Specifically, we show in the following Theorem that SEAL-AI with HDMI as criterion is reduced to a special case of the linear combination of Shannon information terms, which falls into the unified conditional likelihood maximization feature selection framework.

\begin{theorem}
    \emph{With the same assumption as in Theorem~\ref{thm:submod} that considering the hierarchical graph forms a Markov chain, we can use the hierarchical graph mutual information as the criterion for disagreement score under active learning setting as
    \begin{equation}\label{equ:criterion}
    z(G_u;E;\Gamma) = \alpha I(G_u;E) + \beta I(E;\Gamma)
    \end{equation}
    here $\alpha, \beta \in [0,\frac{1}{2}]$. Then directly minimizing the score defined from Eq.{\ref{equ:criterion}} can be considered as minimizing a special case of the unified conditional likelihood maximization feature selection framework.
}
    \label{thm:seal-ai}
\end{theorem}
\emph{Proof}. According to \cite{renner2002mutual}, if the conditional distribution $P(Z_3|Z_1,Z_2)$ is multiplicative, i.e., $\exists$ two functions $s_1$ and $s_2$, s.t., $P(Z_3|Z_1,Z_2) = s_1(Z_3,Z_1)s_2(Z_3,Z_2)$, then $I(Z_1;Z_2) \geq I(Z_1;Z_2|Z_3)$. Since $G_u \rightarrow E \rightarrow \Gamma$ is a Markov chain, we have
$P(G_u|\Gamma,E) = P(G_u|E)$, which means $P(G_u|\Gamma,E)$ is multiplicative. Thus, $I(E;\Gamma) \geq I(E;\Gamma|G_u)$ holds in our case.
We then prove Theorem \ref{thm:seal-ai},
\begin{align}
z(G_u;E;\Gamma) &= \alpha I(G_u;E) + \beta I(E;\Gamma) \notag\\
& = \alpha (I(G_u;E) + \frac{\beta}{\alpha} I(E;\Gamma)) \notag\\
& = \alpha (I(G_u;E) - \gamma I(E;\Gamma) + (\frac{\beta}{\alpha} + \gamma) I(E;\Gamma) ) \notag\\
&\geq \alpha (I(G_u;E) - \gamma I(E;\Gamma) + (\frac{\beta}{\alpha} + \gamma) I(E;\Gamma|G_u) ), \label{equ:CMI}
\end{align}
where Equation \ref{equ:CMI} is reduced to be a special case of the linear combination of Shannon information terms by setting $\gamma \in [0,1]$ and $\frac{\beta}{\alpha} + \gamma \in [0,1]$, which falls into the framework of conditional likelihood maximization for feature selection. Since $z(G_u;E;\Gamma)$ is currently the upper bound for Equation \ref{equ:CMI}, we can minimize Equation \ref{equ:CMI} by minimizing its upper bound $z(G_u;E;\Gamma)$, which concludes the proof.

Theorem \ref{thm:seal-ai} reveals that the HDMI based active learning method SEAL-AI is strongly linked to the branch of information theoretical based feature selection methods. %Further analysis ...
}

\eat{
\subsection{Complexity Analysis}
We analyze the computational complexity of our proposed methods.  \eat{Here we only focus on Algorithm~\ref{algo.naive}, since Algorithm ~\ref{algo.set} is almost the same except the step of selecting candidate graph instances to the training set. } In Algorithm~\ref{algo.naive}, the intensive parts in each iteration contain the updates of IC and HC as well as the selection of candidate instances. We discuss each part in details below.

Regarding IC, the core is to compute the activation matrix $H$ in Eq.~\eqref{equ.H} where the matrix-vector multiplications are up to $O(E_1d)$ flops for one input graph instance; here $E_1$ denotes the number of edges in the graph instance and $d$ is the input feature dimension.  Thus, it leads to the complexity of $O(E_1(L+U)d)$ by going through all $L+U$ graph instances.

Next, the computation by HC in Eq.~\eqref{Eq:GCN} requires $O(E_2m)$ flops in total, where $E_2$ denotes the number of links between graph instances and $m$ is the graph instance feature dimension.  Then in candidate selection, performing comparisons between all unlabeled graph instances has a complexity of $O(L+U)$ given the outputs of two classifiers IC and HC.

Overall, the complexity of our method is $O(E_1(L+U)d+E_2m)$ which scales linearly in terms of the number of edges in each graph instance (i.e., $E_1$), the number of links between graph instances (i.e., $E_2$) and the number of graph instances (i.e., $(L+U)$).  }

\section{EXPERIMENTS}\label{sec.exp}
We evaluate SEAL-CI method on synthetic, text and social network data sets.

\eat{
\subsection{Performance of SAGE}
We use two benchmark data sets, PROTEINS and D\&D, to evaluate the classification accuracy of SAGE, and compare it with the state-of-the-art graph kernels and deep learning approaches.  PROTEINS \cite{borgwardt2005protein} is a graph data set where nodes are secondary structure elements and edges represent that two nodes are neighbors in the amino-acid sequence or in 3D space.  D\&D \cite{dobson2003distinguishing} is a set of structures of enzymes and non-enzymes proteins, where nodes are amino acids, and edges represent spatial closeness between nodes.  Table \ref{tab:twodatasets} lists the statistics of these two data sets.

\begin{table}
  \caption{Statistics of PROTEINS and D\&D}
  \label{tab:twodatasets}
  \centering
  \begin{tabular}{ccc}
    \toprule
&\textbf{PROTEINS}& \textbf{D\&D}\\
    \midrule
	Max number of nodes &620&5748\\
	Avg number of nodes &39.06&284.32\\
	Number of graphs &1113&1178\\
  \bottomrule
\end{tabular}
\end{table}

\subsubsection{Baselines and Metrics}\label{bench.base}

The baselines include four graph kernels and two deep learning approaches:

\begin{itemize}
\item the shortest-path kernel (SP) \cite{borgwardt2005shortest},

\item the random walk kernel (RW) \cite{gartner2003graph},

\item the graphlet count kernel (GK) \cite{shervashidze2009efficient},

\item the Weisfeiler-Lehman subtree kernel (WL) \cite{shervashidze2011weisfeiler},

\item PATCHY-SAN (PSCN) \cite{Niepert2016LearningCN}, and

\item graph2vec \cite{DBLP:journals/corr/NarayananCVCLJ17}.

\end{itemize}

We follow the experimental setting as described in \cite{Niepert2016LearningCN}, and perform 10-fold cross validation.  In each partition, the experiments are repeated for 10 times.  The average accuracy and the standard deviation are reported.  We list results of the graph kernels and the best reported results of PSCN according to \cite{Niepert2016LearningCN}.

For SAGE, we use the same network architecture on both data sets.  The first GCN layer has 128 output channels, and the second GCN has 8 output channels.  We set $d=64$, $r=16$, and the penalization term coefficient to be $0.15$.  The dense layer has 256 rectified linear units with a dropout rate of 0.5. We use minibatch based Adam \cite{DBLP:journals/corr/KingmaB14} to minimize the cross-entropy loss and use He-normal \cite{he2015delving} as the initializer for GCN.  For both data sets, the only hyperparameter we optimized is the number of epochs.

\subsubsection{Results}
Table \ref{tab:sasc} lists the experimental results.  As we can see, SAGE outperforms all the graph kernel methods and the two deep learning methods by 1.27\% -- 5.59\% in accuracy.  This shows that our graph embedding method SAGE is superior. %In addition, since SAGE does not need any preprocessing such as kernel matrix calculation in kernel methods or node ordering in PSCN, it is much faster than these competitors \textbf{(to be done)}.

\begin{table}
  \caption{Accuracy of different classifiers}
  \label{tab:sasc}
  \centering
  \begin{tabular}{ccc}
    \toprule
    \textbf{Approach}&\textbf{PROTEINS}& \textbf{D\&D}\\
    \midrule
	SP&75.07\textpm 0.54\%&-\\
	RW&74.22\textpm 0.42\%&-\\
	GK&71.67\textpm 0.55\%&78.45\textpm 0.26\%\\
	WL&72.92\textpm 0.56\%&77.95\textpm 0.70\%\\
	PSCN&75.89\textpm 2.76\%&77.12\textpm 2.41\%\\
	graph2vec&73.30\textpm 2.05\%&-\\
	SAGE&\textbf{77.26}\textpm 2.28\%&\textbf{80.88}\textpm 2.33\%\\
  \bottomrule
\end{tabular}
\vspace{-0.3cm}
\end{table}
}
\subsection{Synthetic Data}\label{bench.syn}

We evaluate the performance of SEAL-CI on synthetic data.  We first give a description of the synthetic generator, then visualize the learned embeddings.  Finally we compare our methods with baselines in terms of classification accuracy.

\subsubsection{Synthetic Data Generation}

The benchmark data set Cora \cite{mccallum2000automating} contains 2708 papers which are connected by the citation relationship.  We borrow the topological structure of Cora to provide the skeleton (i.e., edges) of our synthetic hierarchical graph.  Then we generate a set of graph instances, which serve as the nodes of this hierarchical graph.  Since there are 7 classes in Cora, we adopt 7 different graph generation algorithms, that is, Watts-Strogatz \cite{watts1998collective}, Tree graph, Erd{\H o}s-R{\'e}nyi \cite{erdos1960evolution}, Barbell \cite{herbster2007prediction}, Bipartite graph, Barab$\acute{a}$si-Albert graph \cite{bollobas2003mathematical} and Path graph, to generate 7 different types of graph instances, and connect them in the hierarchical graph.

Specifically, to generate a graph instance $g$, we randomly sample a number from $[100, 200]$ as its size $n$.  Then we generate its structure and assign the class label according to the graph generation algorithm.  In this step, the parameter $p$ in Watts-Strogatz, Erd{\H o}s-R{\'e}nyi, Bipartite graph and Barab$\acute{a}$si-Albert graph is randomly sampled from $[0.1, 0.5]$, the branching factor for Tree graph is randomly sampled from $[1, 3]$.  At last, to make this problem more challenging, we randomly remove $1\%$ to $20\%$ edges in the generated graph $g$.  The statistics of the generated graph instances are listed in Table \ref{tab:sgg}.

\begin{table}
  \caption{Statistics of generated graph instances}
  \label{tab:sgg}
  \scalebox{1.0}{
  \begin{tabular}{ccccc}
    \toprule
    \textbf{Type}&\textbf{Number}&\textbf{Nodes}&\textbf{Edges}&\textbf{Density} \\
    \midrule
	Watts-Strogatz&351&173&347&2.3\%\\
	Tree&217&127&120&1.5\%\\
	Erd{\H o}s-R{\'e}nyi&418&174&3045&20\%\\
	Barbell&818&169&2379&16.3\%\\
	Bipartite&426&144&1102&10.6\%\\
	Barab$\acute{a}$si-Albert&298&173&509&3.4\%\\
	Path&180&175&170&1.1\%\\
  \bottomrule
\end{tabular}
}
\end{table}

\eat{
\subsubsection{Visualization}
To have a better understanding of the synthesized graph instances, we split all 2708 graph instances into two parts.  1708 instances are used for training and the remaining 1000 instances are used for testing.  We apply SAGE \cite{jiawww19} on the training set and derive the embeddings of the 1000 testing instances.  We then project these learned embeddings into a two-dimensional space by t-SNE \cite{maaten2008visualizing}, as depicted in Figure \ref{fig.diary}. Each color in Figure \ref{fig.diary} represents a graph type.  As we can see from this two-dimensional space, the geometric distance between the graph instances can reflect their graph similarity properly.

We then examine the self-attentive mechanism of SAGE.  We calculate the average attention weight across $r$ views and normalize the resulting attention weights to sum up to 1. From the testing instances, we select three examples: a Tree graph, an Erd{\H o}s-R{\'e}nyi graph and a Barbell graph, for which SAGE has a high confidence ($>0.9$) in predicting their class label.  The three examples are depicted in Figure \ref{fig.rhythm}, where a bigger node implies a larger average attention weight, and a darker color implies a larger node degree.  On the left is a Tree graph, in which most of the important nodes learned by SAGE are leaf nodes.  This is reasonable since leaves are discriminative features to distinguish Tree graph from the other 6 types of graphs.  In the center is an Erd{\H o}s-R{\'e}nyi graph. We cluster these nodes into 5 groups by hierarchical clustering \cite{johnson1967hierarchical}, and see that SAGE tends to highlight those nodes with large degrees within each cluster.  On the right is a Barbell graph, in which SAGE pays attention to two kinds of nodes.  The first kind is those nodes that connect a dense graph and a path, and the second kind is the nodes that are on the path.

\begin{figure}
\begin{center}
\includegraphics [width=0.4\textwidth,scale=1]{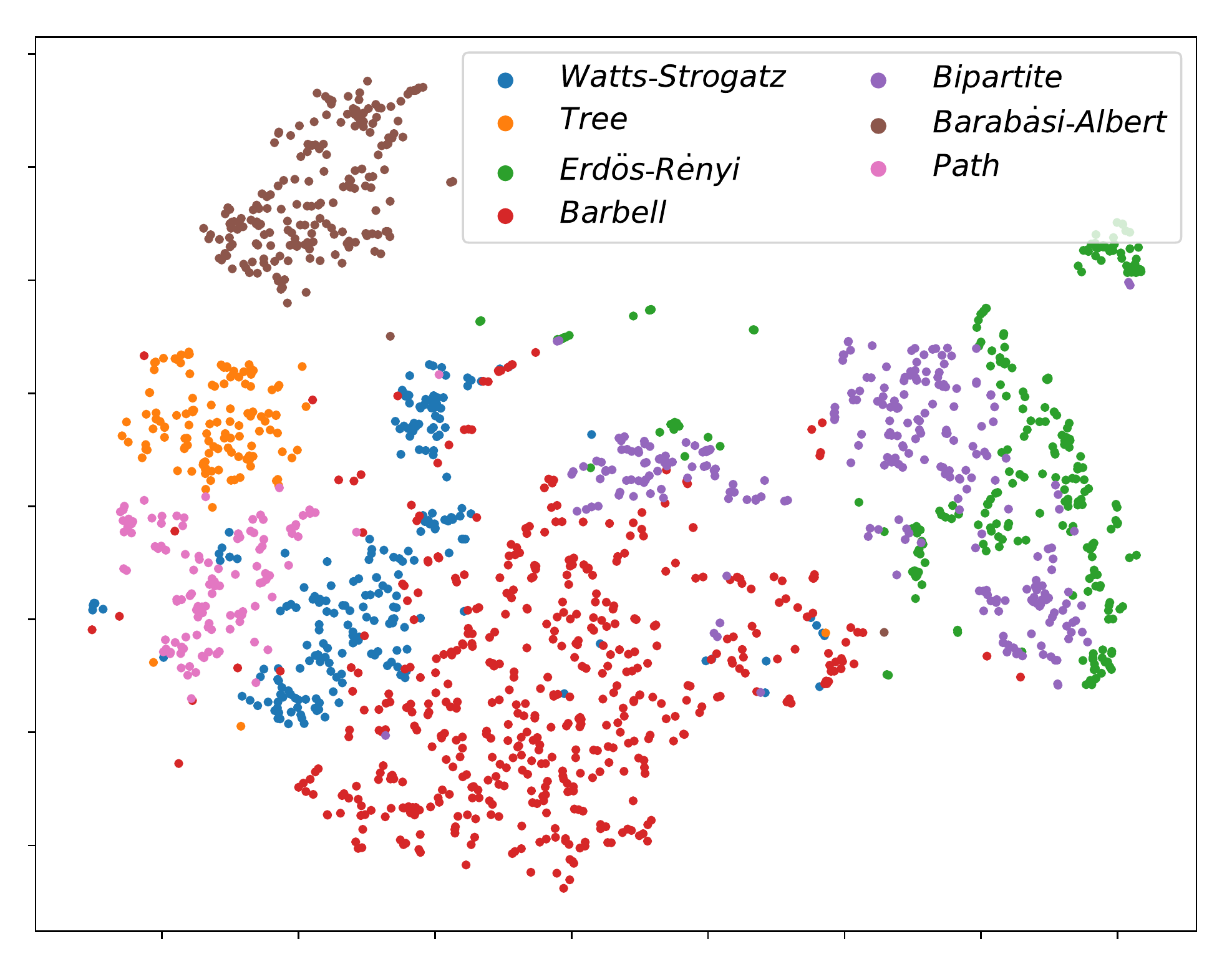}
\end{center}

\caption{Two-dimensional visualization of graph embeddings generated from the synthesized graph instances using SAGE. The nodes are colored according to their graph types.}
\label{fig.diary}
\end{figure}
}
\eat{
\begin{figure*}
\centering
\includegraphics [width=0.28\textwidth]{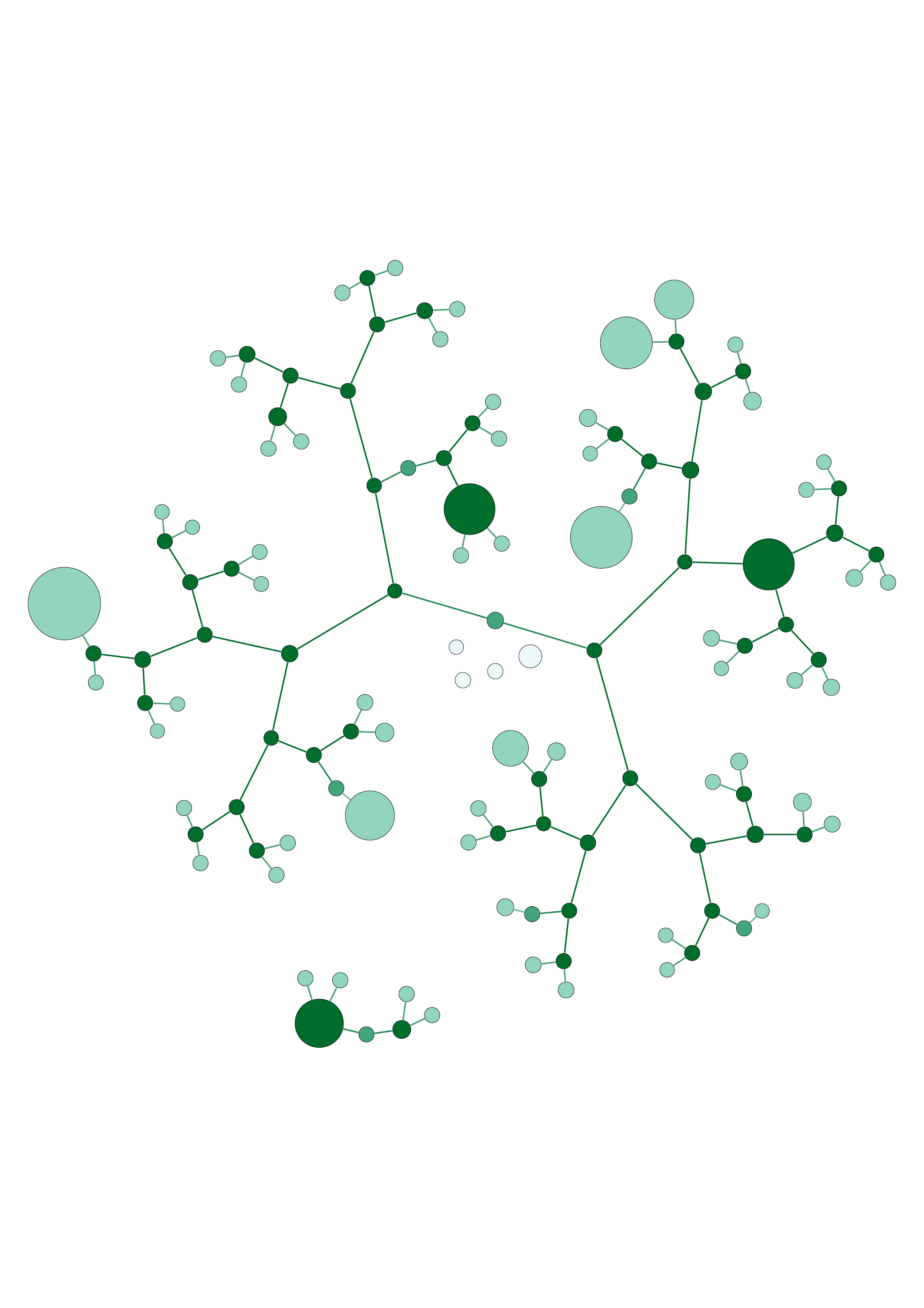}
\includegraphics [width=0.28\textwidth]{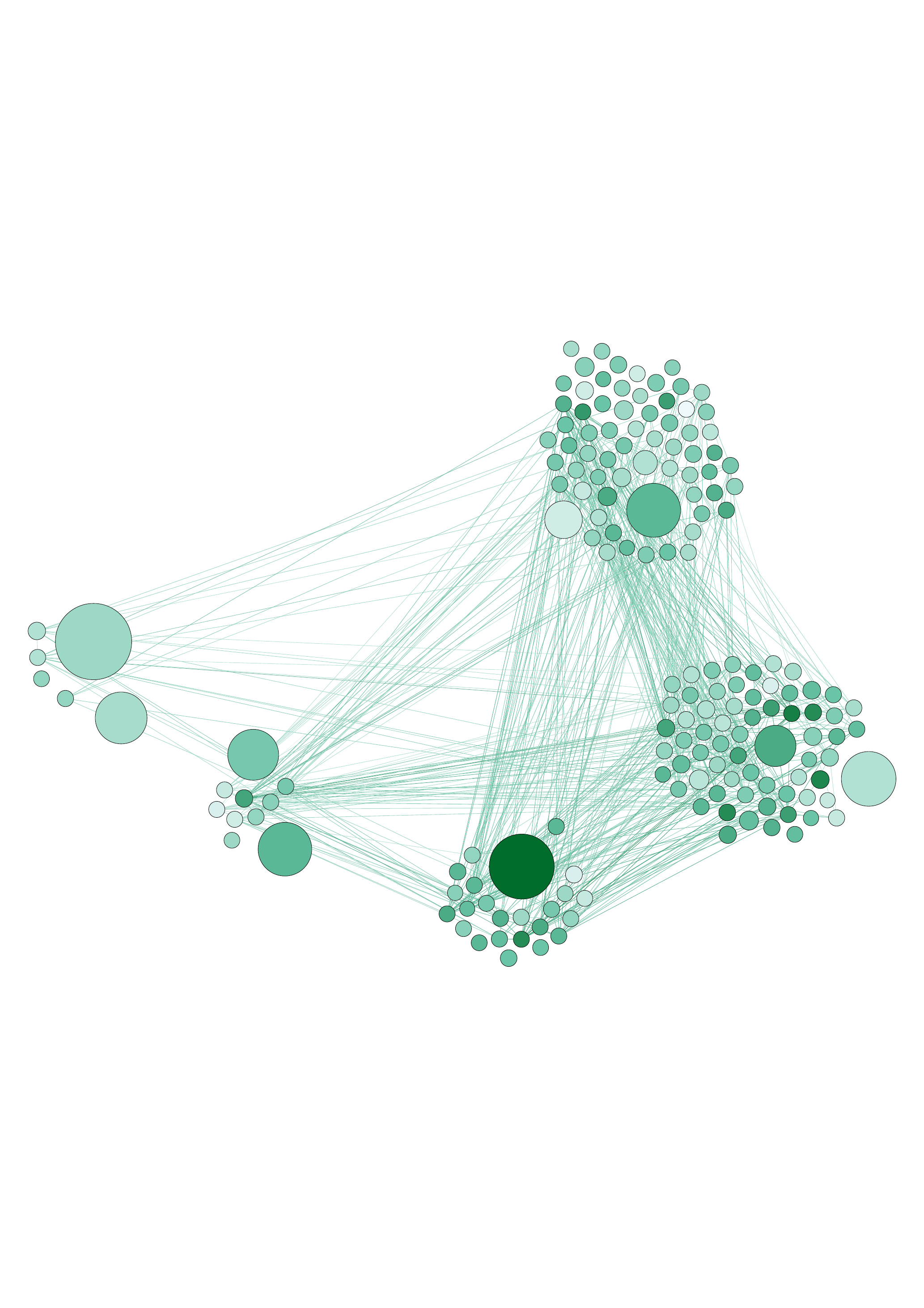}
\includegraphics [width=0.28\textwidth]{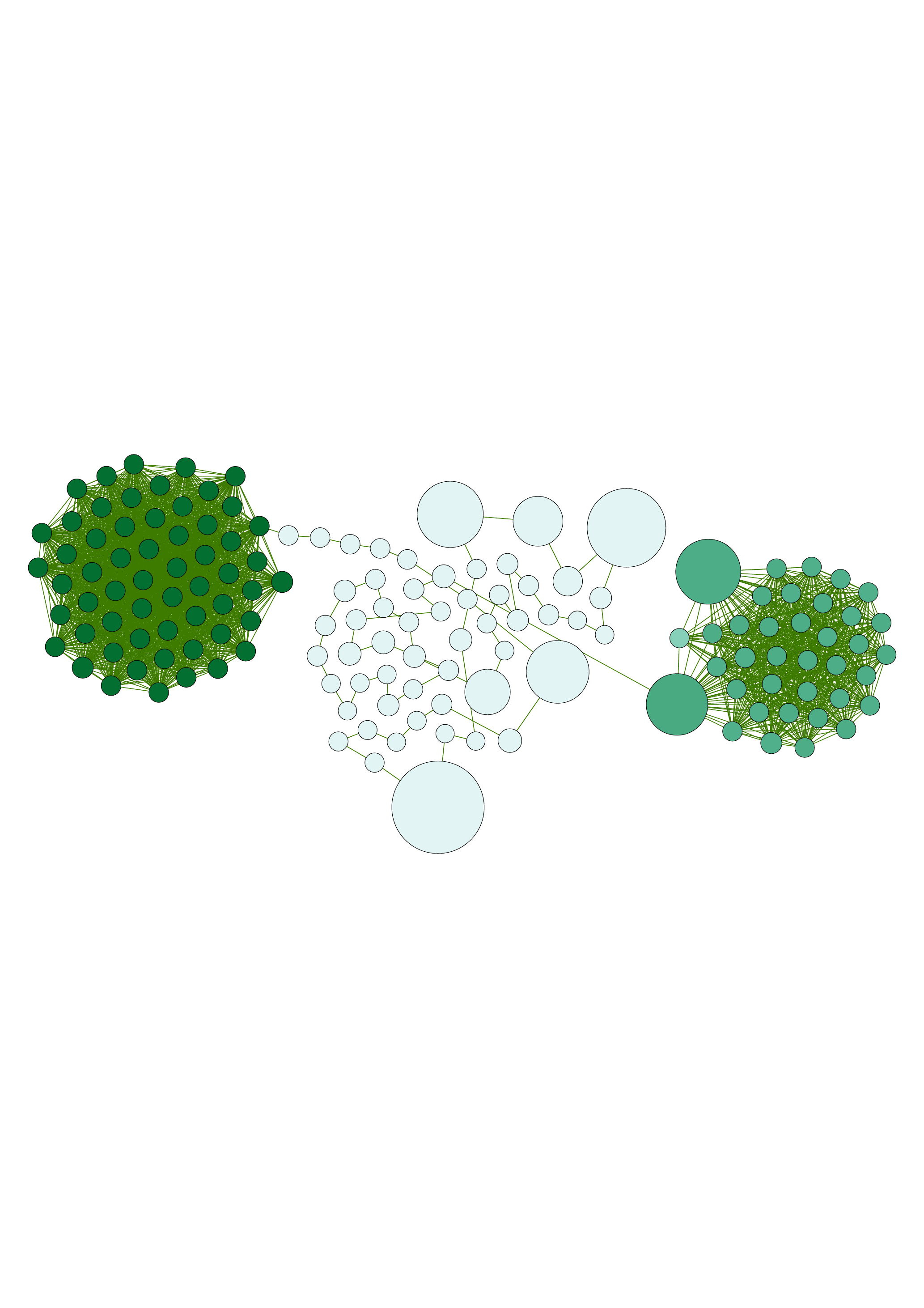}
\label{rhythm}
\caption{Attention of graph embeddings on 3 different types of graphs (left: Tree graph; middle: Erd{\H o}s-R{\'e}nyi graph; right: Barbell graph). A bigger node indicates a larger importance, and a darker color implies a larger node degree.}
\label{fig.rhythm}
\end{figure*}
}

\subsubsection{Baselines and Metrics}\label{syn.base}
We use the following approaches as our baselines:
\begin{itemize}
\item GK-SVM \cite{shervashidze2009efficient}, which calculates the graphlet count kernel (GK) matrix, then GK-SVM feeds the kernel matrix into SVM \cite{hearst1998support}.

\item WL-SVM \cite{shervashidze2011weisfeiler}, which is similar as above but using the Weisfeiler-Lehman subtree kernel (WL).

\item graph2vec-GCN \cite{DBLP:journals/corr/NarayananCVCLJ17}, which embeds the graph instances by graph2vec and then feeds the embeddings to GCN.

% \item cautious-SAGE-Cheby, which is similar to SEAL-CI except that we replace GCN with Cheby-GCN \cite{defferrard2016convolutional}.

% \item active-SAGE-Cheby, which is similar to SEAL-AI except that we replace GCN with Cheby-GCN \cite{defferrard2016convolutional}.

\item SAGE \cite{jiawww19}, which ignores the connections between graph instances and treats them independently.

\item \textcolor{black}{GAT \cite{velivckovic2017graph}, which uses attention mechanisms in node-level neighborhood aggregation.}

\item \textcolor{black}{GIN \cite{xu2018powerful}, which is the GNN model with state-of-the art performance in graph classification.}

\item MIRACLE \cite{wang2021multi}, which uses the multi-view contrast learning method to exploit relation between the structure of graph instance level and the one at the hierarchical graph level.

\item SEAL is the base of SEAL-CI, which differs with SEAL-CI in two ways: 1) it does not consider label enlargement, and 2) it is trained in an end-to-end fashion with the given class labels.

\end{itemize}
We use 300 graph instances as the training set for all methods.  We use 1000 graph instances as the testing set.  We run each method 5 times and report its average accuracy.  The number of epochs for graph2vec is 1000 and the learning rate is 0.3. For SEAL-CI, we use SAGE \cite{jiawww19} as the graph pooling method.  We use a two-layer GCN in IC, in which the first GCN layer has 32 output channels and the second GCN layer has 4 output channels.  The dense layer has 48 units with a dropout rate of 0.3. 

\subsubsection{Results}
Table \ref{tab:synr} shows the experimental results for semi-supervised graph classification.  Among all approaches, SEAL-CI achieves the best performance.  In the following, we analyze the performance of all methods categorized into 3 groups.

\noindent\underline{Group *1}: All the embedding-based methods perform better than these two kernel methods, which proves that embedding vectors are effective representations for graph instances and are suitable input for graph neural networks. 

\noindent\underline{Group *2}: graph2vec-GCN achieves 85.2\% accuracy, which is comparable to that of SAGE, but lower than that of SEAL-CI.  One possible explanation is that graph2vec is an unsupervised embedding method, which fails to generate discriminative embeddings for classification.  Another possibility is that the 300 training instances do not include very informative ones. These limitations of graph2vec motivate us to use supervised graph representation modules such as SAGE and the label enlargement framework in SEAL-CI. MIRACLE and SEAL generate the graph instance representations in a supervised way, and they outperform graph2vec-GCN by more than 1.5\%.  

% \noindent\underline{Group *3}: cautious-SAGE-Cheby outperforms SAGE by only 0.8\%, which is not remarkable considering that it exploits many more training instances.  The accuracy of active-SAGE-Cheby is 3.3\% lower than that of SEAL-AI, which means that Cheby-GCN is inferior to GCN.

\noindent\underline{Group *3}: SEAL-CI outperforms MIRACLE and SEAL, which proves the effectiveness of our hierarchical graph perspective and the label enlargement algorithm for semi-supervised graph classification.  

\begin{table}
  \caption{Comparison of different methods on the synthetic data set for semi-supervised graph classification}
  \label{tab:synr}
  \centering
  \begin{tabular}{ccc}
    \toprule
     &\textbf{Algorithm}&\textbf{Accuracy} \\
    \midrule
		\multirow{3}{*}{*1}& \textbf{GK-SVM} & 77.8\%\\
		& \textbf{WL-SVM} & 83.4\%\\
		& \textbf{SAGE} & 85.7\% \\
		& \textcolor{black}{\textbf{GAT}} & 85.0\% \\
		& \textcolor{black}{\textbf{GIN}} & 86.7\% \\
		\hline
		\multirow{3}{*}{*2}& \textbf{graph2vec-GCN} &85.2\%\\
		& \textbf{MIRACLE} & 86.7\%\\
		& \textbf{SEAL} & 87.8\% \\
		\hline
		\multirow{1}{*}{*3} & \textbf{SEAL-CI} & 91.2\%\\
		%& \textbf{SEAL-AI} & \textbf{92.4}\%\\
	  \bottomrule
\end{tabular}
\end{table}

\subsection{Text Data}\label{textdata}
We evaluate SEAL-CI on the arXiv paper dataset. Firstly, We introduce the compositions of arXiv paper dataset and present how to construct hierarchical graph from text data. Then, we show the evaluation results and give some insights on how to construct a hierarchical graph for text data.

\subsubsection{Data Description}
arXiv is an open-access repository of electronic preprints. 
We collect 4666 Computer Science (CS) arXiv papers indexed by Microsoft Academic Graph (MAG) \cite{beltagy2019scibert}. These papers belong to five subject areas including AI, CL, IT, LG and CV. The arXiv papers data forms a citation network which indicates citation relationships. Each paper consists of two parts: title and abstract. The statistics of the arXiv paper dataset is listed in Table \ref{tab:s_arxiv}.

% ('arxiv cs ai', 'arxiv cs cl', 'arxiv cs it', 'arxiv cs lg', 'arxiv cs cv')
% There are about 16000 papers submitted per month as of April 2021. We have selected 169,343 Computer Science (CS) arXiv papers indexed by MAG \cite{} to construct this hierarchical dataset. These papers are belong to one of the 40 subject areas of arXiv CS papers, e.g., cs.AI, cs.LG, and cs.OS, which are manually determined (i.e., labeled) by the paper’s authors and arXiv moderators. And each paper consist of their title and abstract for extracting the feature.  

% We have 1,166,243 citing relationship that one paper cites another one and 1,266,243 co-author relationship which indicates the two papers have the same first author.

The connections among graph instances (i.e., papers) are provided by citation relations. Each paper is a semantic graph instance constructed from its title and abstract. 
There are three type of nodes in a semantic graph instance: abstract node, title node and argument nodes. 
Specifically, for each sentence in abstract, we parse it into a tuple with Semantic Role Labeling (SRL) toolkit developed by AllenNLP \footnote{https://demo.allennlp.org/semantic-role-labeling}.
For each tuple, we regard its elements as the argument nodes of the semantic graph. The node features of the semantic graph are extracted by pretrained Bert model \cite{devlin2019bert}.
The connections within the semantic graph are created as the following: 1) same tuple, if two nodes are from the same tuple, we connect them, and 2) vocabulary overlapping, we connect the node pair if the size of overlapped words is larger than half of the
minimum size of any text node.
In the Figure \ref{fig.arxiv_hier} we show an example of hierarchical graph of arXiv papers data.

\begin{figure*}
\begin{center}
\includegraphics [width=0.95\textwidth]{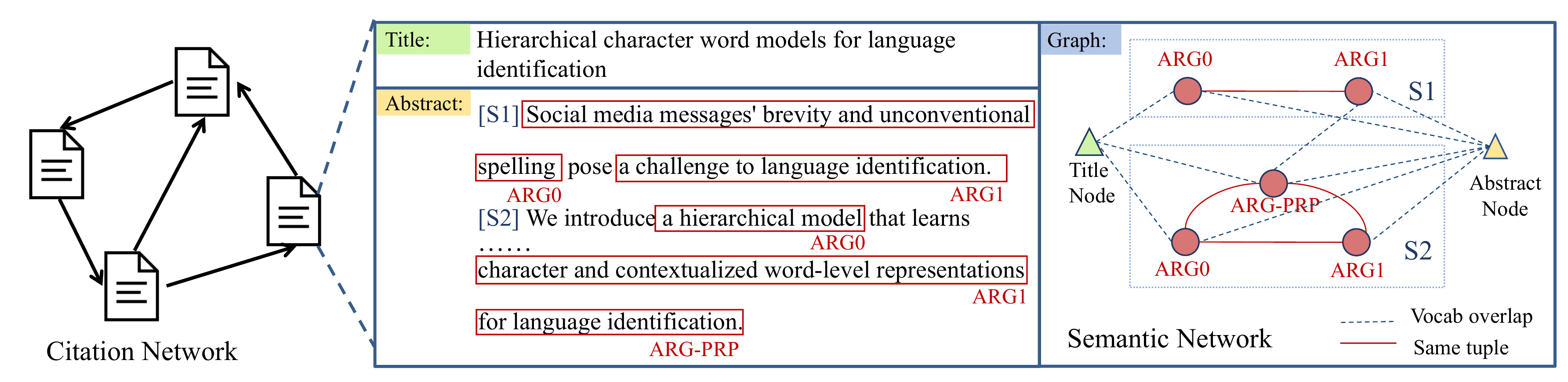}
\end{center}

\caption{An illustration of the hierarchical graph of arXiv papers data}
\label{fig.arxiv_hier}
\end{figure*}

%nodes feature construction

% cs.AI, cs.CL, cs.IT, cs.LG and cs.CV
% 10 30 28 24 16
% Counter({28: 909, 16: 1720, 24: 1157, 30: 648, 10: 232})
%label2category['arxiv category'][10], label2category['arxiv category'][30], label2category['arxiv category'][28], label2category['arxiv category'][24], label2category['arxiv category'][16] 
% 8.577586206896552
% 158.04310344827587
% 9.257716049382717
% 140.36265432098764
% 10.124312431243125
% 170.2948294829483
% 8.69057908383751
% 160.6266205704408
% 9.41046511627907
% 170.54302325581395
% print(sum(ai_10_tlen) / float(len(ai_10_tlen)))
% print(sum(ai_10_alen) / float(len(ai_10_alen)))
% print(sum(cl_30_tlen) / float(len(cl_30_tlen)))
% print(sum(cl_30_alen) / float(len(cl_30_tlen)))
% print(sum(it_28_tlen) / float(len(it_28_tlen)))
% print(sum(it_28_alen) / float(len(it_28_tlen)))
% print(sum(lg_24_tlen) / float(len(lg_24_tlen)))
% print(sum(lg_24_alen) / float(len(lg_24_tlen)))
% print(sum(cv_16_tlen) / float(len(cv_16_tlen)))
% print(sum(cv_16_alen) / float(len(cv_16_tlen)))

\begin{table}
  \caption{Statistics of arXiv paper data}
  \label{tab:s_arxiv}
  \centering
  \begin{tabular}{ccccc}
    \toprule
    \textbf{Class}&\textbf{Number}&\textbf{Length of title}&\textbf{Length of abstract}\\
    \midrule
	AI&232&8.57&158.04\\
	CL&648&9.25&140.36\\
	IT&909&10.12&170.29\\
	LG&1157&8.69&160.63\\
	CV&1720&9.41&170.54\\
  \bottomrule
\end{tabular}
\end{table}

\subsubsection{Baselines and Metrics}
In addition to  the set of baselines in Section \ref{syn.base}, we compare with the following Bert-feature baselines: 
\begin{itemize}
\item Bert-MLP, which classifies arXiv papers by feeding average features of title and abstract into a multi-layer perceptron (MLP).

\item Bert-IC, which feeds semantic graph instances into IC  and classifies arXiv papers in an independent graph classification fashion.

\item Bert-HC, which feeds average features of title and abstract into HC and classifies arXiv papers in a transductive node classification fashion.

\end{itemize}
400 graph instances are used as labeled training instances for all methods\eat{ except SEAL-AI, for which only 300 are used as labeled training instances at hand and then $B$ is set to 100 for active learning}.  We use 2000 instances for testing for all methods. We  set the dimension of Bert node feature to 768. We use average pool in IC. We set $\lambda=1$ and updated times $t=100$ for SEAL-CI\eat{, and $k=1$ for SEAL-AI}. We run each method 3 times and report its average accuracy.  

\begin{table}
  \caption{Comparison of different methods on arXiv Paper data for semi-supervised graph classification}
  \label{tab:arXiv}
  \centering
  \begin{tabular}{ccc}
    \toprule
     &\textbf{Algorithm}&\textbf{Accuracy} \\
    \midrule
		\multirow{3}{*}{*1}& \textbf{GK-SVM} & 38.7\%\\
		& \textbf{WL-SVM} & 35.1\%\\
		& \textbf{graph2vec-GCN} &42.7\%\\
		& \textcolor{black}{\textbf{GAT}} & 69.3\% \\
		& \textcolor{black}{\textbf{GIN}} & 69.1\% \\
		\hline
		\multirow{3}{*}{*2}& \textbf{Bert-MLP} & 63.7\%\\
		& \textbf{Bert-IC} & 70.1\%\\
		& \textbf{Bert-HC} & 75.2\%\\
		\hline
		\multirow{2}{*}{*3} 
		 & \textbf{MIRACLE} & 78.4\% \\
		 & \textbf{SEAL} & 78.8\% \\
% 		& \textbf{SAGE} &48.1\%\\
		\hline
		\multirow{1}{*}{*4}
		& \textbf{SEAL-CI} & 79.4\%\\
		%& \textbf{SEAL-AI} & \textbf{80.7}\%\\
	  \bottomrule
\end{tabular}
\end{table}

\subsubsection{Results}
Table \ref{tab:arXiv} shows the experimental results for semi-supervised
graph classification on arXiv paper data. SEAL-CI obtains the state-of-the-art performances among all baselines. To analyze the results of these baselines, all methods are divided into four group as the following. 

% And there are one group of methods based on Bert to discuss the influence of text feature extracted by Bert. 
\noindent\underline{Group *1}: Both GK-SVM and WL-SVM only utilize the structural information of semantic graph instances. And their classification accuracies are lower than 40\%, which indicates that structural information of semantic graph instances is not enough to obtain satisfied performance on text data. Graph2vec-GCN tries to exploit both structure of semantic graph instances and the hierarchical graph but still has a weak performance with an accuracy of 42.7\%.

\noindent\underline{Group *2}: In this group, we show the performance of several baselines based on Bert. Bert-MLP only uses Bert features extracted from title and abstract, and obtains an accuracy of 63.7\%. Bert-IC outperforms Bert-MLP by about 6.4\%, by combing both structural and Bert features of text data.
Bert-HC, by utilizing the connections between Bert features of title and abstract, achieves the best performance among baselines within Group *2.

\noindent\underline{Group *3}: Both SEAL and MIRACLE aim to minimize mutual information between graph instance representations and hierarchical representations and differ in the measurement of consistency. SEAL and MIRACLE outperform baselines in Group *1 with a margin of  30\% and Group *2 with a margin of more than 3\%. 

\noindent\underline{Group *4}: SEAL-CI outperforms MIRACLE and SEAL\eat{. Moreover, SEAL-AI outperforms SEAL-CI by about 1\%}, which demonstrates the effectiveness of the cautious iteration algorithm \ref{algo.naive}.

\subsubsection{Influence of the number of labeled training instances}
\textcolor{black}{We examine how the number of labeled training instances affects the performance of our methods.  We train Bert-IC, Bert-HC, SEAL and SEAL-CI with a label size of $\{100, 150, 200, 250, 300, 350, 400\}$.  \eat{We train SEAL-AI with 140 labeled instances and then set the budget $B$ for active learning at $\{0, 40, 80, 120, 160\}$.  Thus the three methods have the same number of labeled training instances. } We set $\lambda=1$ in SEAL-CI.  We run all methods 5 times, and plot their average accuracy in Figure \ref{fig.bud}.  As we can see from Figure \ref{fig.bud}, SEAL-CI performs the best since it can utilize more training samples.  \eat{As the number of labeled training instances increases, the performance of SEAL-AI improves dramatically.  SEAL-AI catches up with SEAL-CI at 260 labeled training instances and outperforms SEAL-CI at 300 labeled training instances.  It validates that SEAL-AI can make use of the iterations to find informative and accurate training samples.  Meanwhile SEAL-CI trusts the prediction of IC conditionally on its confidence, which may bring some noise to the learning process.}  SEAL outperform Bert-HC and Bert-IC in all cases, which makes sense because SEAL makes good use of the hierarchical graph setting and consider the connections between the graph instances for classification. Meanwhile, as SEAL-CI is based on SEAL, the good performance of SEAL provides a foundation for SEAL-CI to find informative and accurate pseudo labels, which could further boost the performance.  }

\begin{figure}
\begin{center}
\includegraphics [width=0.45\textwidth,scale=1]{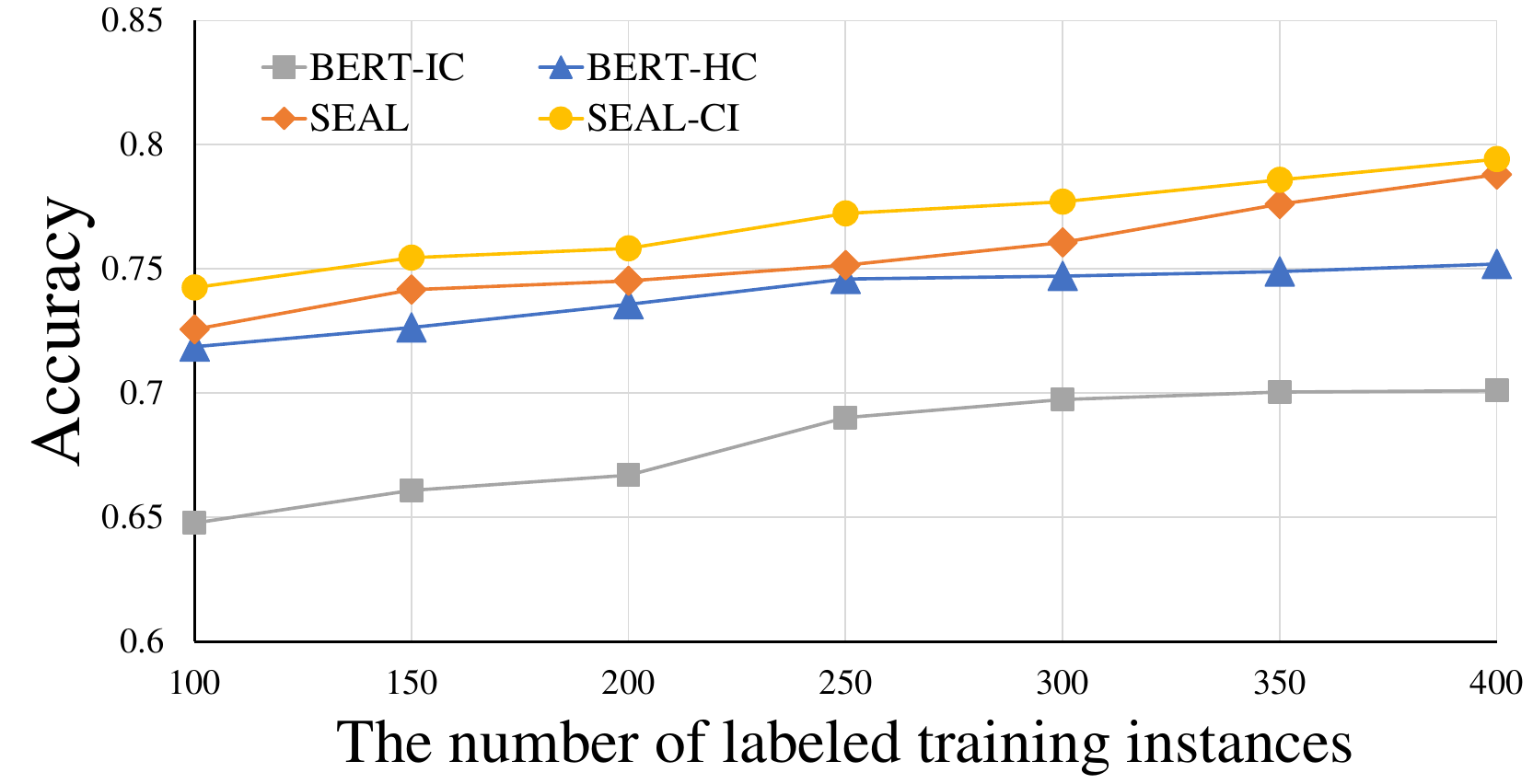}
\end{center}

\caption{Accuracy with different number of labeled training instances on text data for semi-supervised graph classification.}
\label{fig.bud}
\end{figure}

\subsubsection{Discussion}
How to construct hierarchical graphs from text data is an open question.  In the above experiment, we use semantic parsing techniques to construct graph instances for papers, and citation relationships to connect these papers. We think hierarchical graph structures could be a good paradigm to boost the performance of Natural Language Processing classification tasks. Moreover, it is worth mentioning that there are several other ways to construct semantic graphs such as Name Entity Recognition (NER).

\subsection{Social Network Data}
In this section, we evaluate SEAL-CI on Tencent QQ group data.  We describe the characteristics of this data set and then present the experimental results.  Finally, we have some open discussions on how to construct a hierarchical graph for social network data.

\subsubsection{Data Description}
Tencent QQ is a social networking platform in China with nearly 800 million monthly active users.  There are around 100 million active online QQ groups. In this experiment, we select 37,836 QQ groups with 18,422,331 unique anonymized users.  For each user, we extract seven personal features:
\begin{itemize}
\item number of days ever since the registration day;
\item most frequently active area code in the past 90 days;
\item number of friends;
\item number of active days in the past 30 days;
\item number of logging in the past 30 days;
\item number of messages sent in the past 30 days;
\item number of messages sent within QQ groups in the past 30 days.
\end{itemize}

We have 298,837,578 friend relationships among these users.  1,773 groups are labeled as ``game'' and the remaining groups are labeled as ``non-game''.

We construct the hierarchical graph from this Tencent QQ group data as follows.  A user is treated as an object, and a QQ group as a graph instance.  The users in one group are connected by their friendship.  The attribute matrix $X$ is filled with the attribute values of the users.  The statistics of the graph instances are listed in Table \ref{tab:sqq}.  We build the hierarchical graph from the graph instances via common members across groups, that is, if two groups have more than one common member, we connect them.

\begin{table}
  \caption{Statistics of collected Tencent QQ groups}
  \label{tab:sqq}
  \centering
  \begin{tabular}{ccccc}
    \toprule
    \textbf{Class label}&\textbf{Number}&\textbf{Nodes}&\textbf{Edges}&\textbf{Density} \\
    \midrule
	game&1,773&147&395&5.48\%\\
	non-game&36,063&365&1586&3.28\%\\
  \bottomrule
\end{tabular}
\end{table}

\subsubsection{Baselines and Metrics}
We use the same set of baselines as in Section \ref{syn.base}.  1000 graph instances are used as labeled training instances for all methods\eat{ except SEAL-AI, for which only 500 are used as labeled training instances at hand and then $B$ is set to 500 for active learning}.  We use 10,000 instances for testing for all methods.  We run each method 3 times and report its average performance.  For IC, we use SAGE and set the parameters as the same in Section \ref{bench.syn}. Since the class distribution is extremely imbalanced in this data set, we report the Macro-F1 instead of accuracy.

\subsubsection{Results}

Table \ref{tab:qq} shows the experimental results.  SEAL-C/AI outperform GK, WL and graph2vec by at least 36\% in Macro-F1.  
% Within our framework, GCN is better than Cheby-GCN for about 6\%. 
SEAL beats MIRACLE by about 4\%. SEAL-CI outperforms SEAL and MIRACLE with a margin of more than 1.2\%, which demonstrates the effectiveness of the cautious iteration algorithm \ref{algo.naive} on social network data.
\eat{SEAL-AI outperforms SEAL-CI by 1.6\%.  } Next we consider the relation between $\lambda$ and the false prediction rate (i.e., the percentage of misclassified instances) in the pseudo-labeling.  Figure \ref{fig.seal} shows the false prediction rate within the $\lambda$ most confident predictions of IC.  As we can see, the false prediction rate increases as $\lambda$ increases and it reaches $2.4\%$ when $\lambda=2000$. In the framework of SEAL-CI, as the iteration goes on, we shall bring in more noise to the parameter update, while all the training samples in SEAL are informative and correct. This observation tells us when using SEAL-CI, we need to be ``cautious'' enough, i.e., $\lambda$ should be small enough, to ensure the false prediction rate is small.

\eat{Next we provide the reason why SEAL-AI outperforms SEAL-CI on this data set.  Figure \ref{fig.seal} shows the false prediction rate (i.e., the percentage of misclassified instances) within the $\lambda$ most confident predictions of IC.  As we can see, the false prediction rate increases as $\lambda$ increases and it reaches $2.4\%$ when $\lambda=2000$. In the framework of SEAL-CI, as the iteration goes on, we shall bring in more noise to the parameter update of SAGE, while all the training samples in SEAL-AI are informative and correct. This explains why SEAL-AI outperforms SEAL-CI on this Tencent QQ group data.}

\begin{figure}
\begin{center}
\includegraphics [width=0.45\textwidth,scale=1]{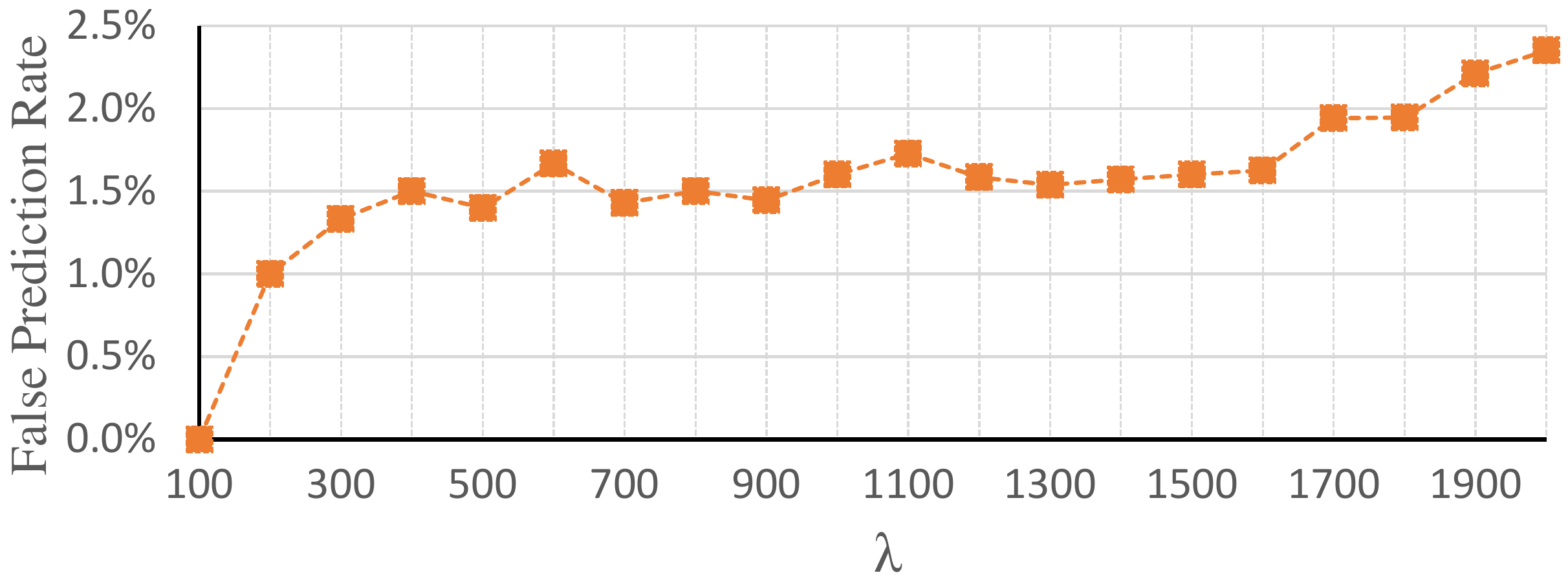}
\end{center}

\caption{The false prediction rate of IC with $\lambda$ in SEAL-CI.}
\label{fig.seal}
\end{figure}

\begin{table}
  \caption{Comparison of different methods on Tencent QQ group data for semi-supervised graph classification}
  \label{tab:qq}
  \centering
  \begin{tabular}{ccc}
    \toprule
     &\textbf{Algorithm}&\textbf{Macro-F1} \\
    \midrule
		\multirow{3}{*}{*1}& \textbf{GK-SVM} & 48.8\%\\
		& \textbf{WL-SVM} & 47.8\%\\
		& \textbf{SAGE} & 75.0\% \\
		& \textcolor{black}{\textbf{GAT}} & 79.9\% \\
		& \textcolor{black}{\textbf{GIN}} & 79.5\% \\
		\hline
		\multirow{3}{*}{*2}& \textbf{graph2vec-GCN} &48.1\%\\
		& \textbf{MIRACLE} & 80.3\%\\
		& \textbf{SEAL} & 84.3\%\\
		\hline
		\multirow{1}{*}{*3}  
		& \textbf{SEAL-CI} & 85.5\%\\
		%& \textbf{SEAL-AI} & \textbf{87.1}\%\\
	  \bottomrule
\end{tabular}
\end{table}

\subsubsection{Visualization}
We provide visualization of a ``game'' group and its neighborhood in Figure \ref{fig.gam}.  The left part is the ego network of the center ``game'' group.  In the one-hop neighborhood of this ``game'' group, there are 10 ``game'' groups and 19 ``non-game'' groups. ``Game'' groups are densely interconnected with a density of 34.5\%, whereas ``non-game'' groups are sparsely connected with a density of 8.8\%.  The much higher density among ``game'' groups validates that common membership is an effective way to relate them in a hierarchical graph for classification.  The right part depicts the internal structure of the ego ``game'' group with 22 users. A bigger node indicates a larger importance obtained by SAGE, and a darker green color implies a larger node degree.  These 22 members are loosely connected and there are no triangles.  This makes sense because in reality online ``game'' groups are not acquaintance networks.  Regarding the learned node importance, node 1 has the highest importance as it is the second active member and has a large degree in this group.  Node 16 is also important since it has the highest degree in this group.  The ``border'' member 5 has a big attention weight since it has the largest number of days ever since the registration day and is quite active in this group.
\begin{figure}
\begin{center}
\includegraphics [width=0.35\textwidth,scale=1]{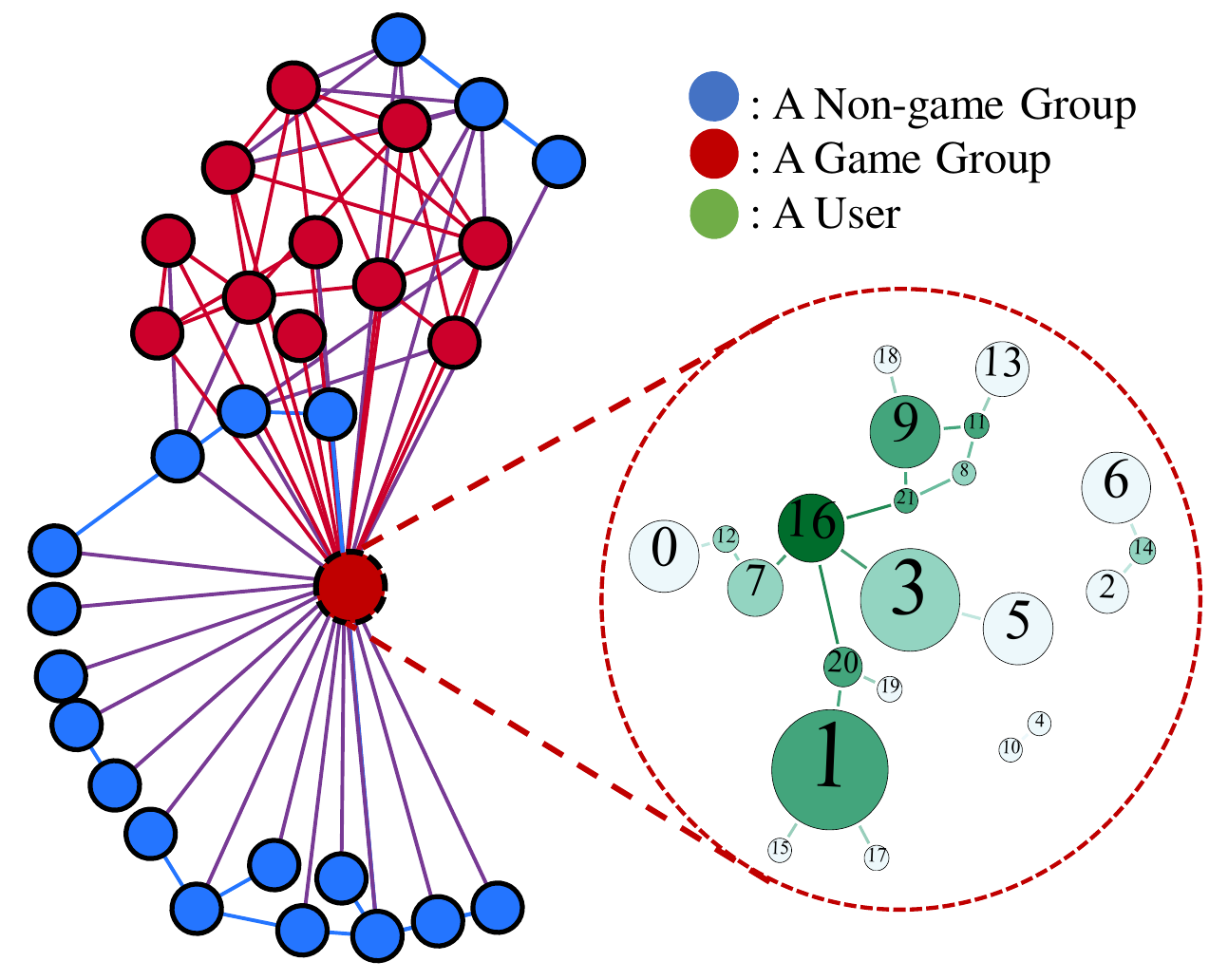}
\end{center}

\caption{The ego network of a ``game'' group.  The left side is the ego network, in which ``game'' groups are in red and ``non-game'' groups are in blue.  The right side is the internal structure of the ego ``game'' group, in which a bigger node indicates a larger importance, and a darker color implies a larger node degree.}
\label{fig.gam}
\end{figure}

\subsubsection{Discussion}
How to construct a hierarchical graph from social network is an open question.  In the above experiment, we connect two QQ groups if they have more than one common member (i.e., $>1$).  When we change the threshold, it directly affects the edge density in the hierarchical graph, and may influence the classification performance.  For example, if we connect two QQ groups when they have one common member or more (i.e., $\geq 1$), the edge density is 2.8\% compared with 0.27\% in the first setting.  A proper setting of this threshold is data dependent, and can be determined through a validation set.

\section{CONCLUSION}\label{sec.con}

In this paper, we study the problem of semi-supervised hierarchical graph classification. To enforce a consistency among different levels of the hierarchical graph,  we propose Hierarchical Graph Mutual Information (HGMI) and show HGMI can be computed via non-hierarchical graph mutual information computation methods.  We build two classifiers IC and HC at the graph instance level and the hierarchical graph level respectively to fully exploit the available information.  Our semi-supervised solution SEAL-CI adopts an iterative framework to update IC and HC alternately with an enlarged training set.  We present two hierarchical graph benchmarks and demonstrate that SEAL-CI outperforms other competitors by a significant margin.

\section{Acknowledgements}

The work described in this paper was supported by grants from HKUST(GZ) under a Startup Grant, HKUST-GZU Joint Research Collaboration Fund (Project No.: GZU22EG05) and Tencent AI Lab Rhino-Bird Focused Research Program RBFR2022008.

\eat{
% use section* for acknowledgment
\ifCLASSOPTIONcompsoc
  % The Computer Society usually uses the plural form
  \section*{Acknowledgments}
\else
  % regular IEEE prefers the singular form
  \section*{Acknowledgment}
\fi

The authors would like to thank...
}

% Can use something like this to put references on a page
% by themselves when using endfloat and the captionsoff option.
\ifCLASSOPTIONcaptionsoff
  \newpage
\fi

% trigger a \newpage just before the given reference
% number - used to balance the columns on the last page
% adjust value as needed - may need to be readjusted if
% the document is modified later
%\IEEEtriggeratref{8}
% The "triggered" command can be changed if desired:
%\IEEEtriggercmd{\enlargethispage{-5in}}

% references section

% can use a bibliography generated by BibTeX as a .bbl file
% BibTeX documentation can be easily obtained at:
% http://mirror.ctan.org/biblio/bibtex/contrib/doc/
% The IEEEtran BibTeX style support page is at:
% http://www.michaelshell.org/tex/ieeetran/bibtex/
%\bibliographystyle{IEEEtran}
% argument is your BibTeX string definitions and bibliography database(s)
%\bibliography{IEEEabrv,../bib/paper}
%
% <OR> manually copy in the resultant .bbl file
% set second argument of \begin to the number of references
% (used to reserve space for the reference number labels box)

% \bibliographystyle{unsrt}
% \bibliography{sample}
\printbibliography %added

% biography section
% 
% If you have an EPS/PDF photo (graphicx package needed) extra braces are
% needed around the contents of the optional argument to biography to prevent
% the LaTeX parser from getting confused when it sees the complicated
% \includegraphics command within an optional argument. (You could create
% your own custom macro containing the \includegraphics command to make things
% simpler here.)
%\begin{IEEEbiography}[{\includegraphics[width=1in,height=1.25in,clip,keepaspectratio]{mshell}}]{Jia Li.jpg}
% or if you just want to reserve a space for a photo:

\begin{IEEEbiography}[{\includegraphics[width=1in,height=1.25in,clip,keepaspectratio]{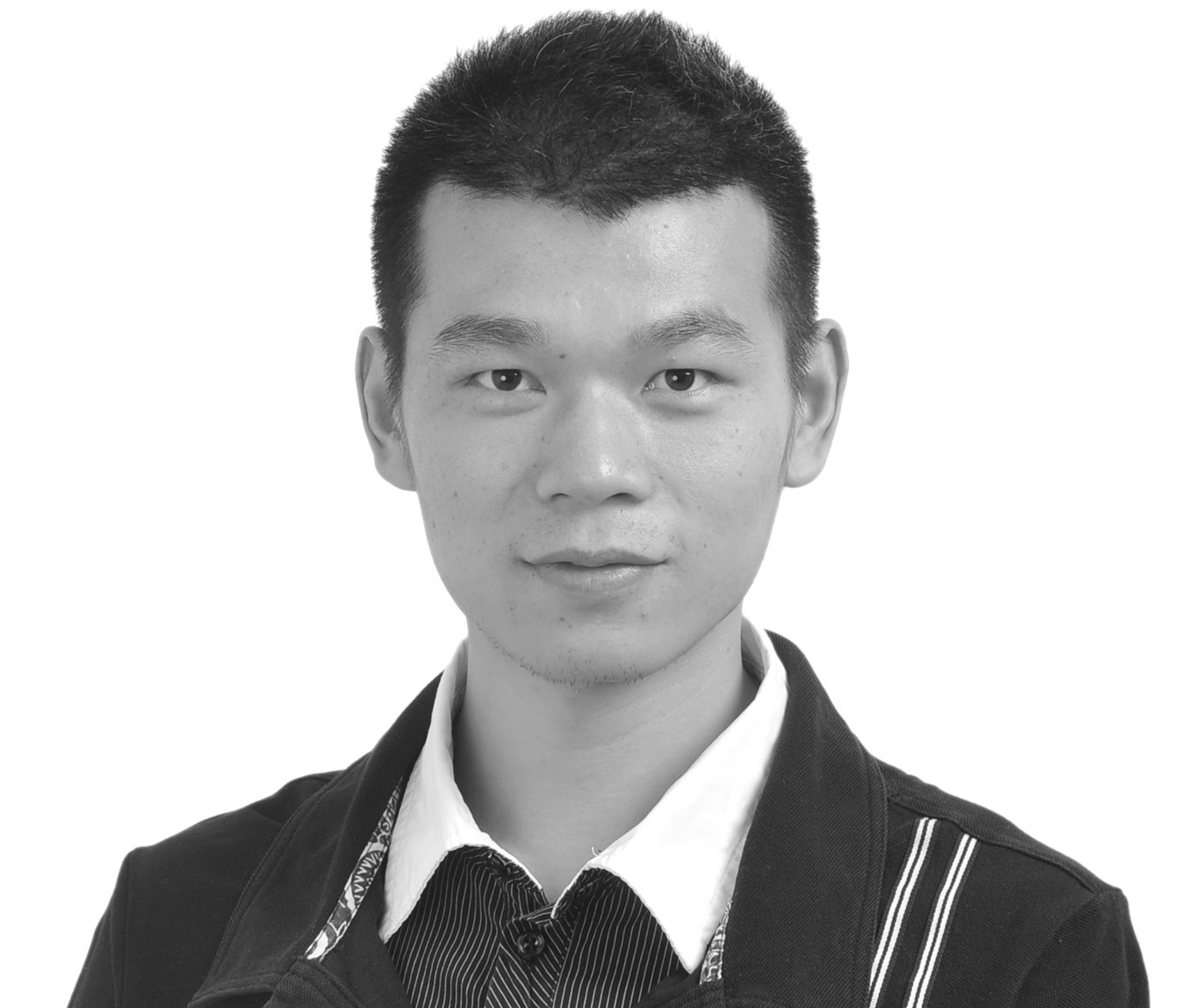}}]{Jia Li}
 is an assistant professor at Hong Kong University of Science and Technology (Guangzhou) and an affiliated assistant professor at Hong Kong University of Science and Technology. He received the Ph.D. degree at Chinese University of Hong Kong in 2021. His research interests
include machine learning, data mining and deep graph learning. He has published several papers as the leading author in top conferences such as KDD, WWW and NeurIPS.
\vspace{-10mm}
\end{IEEEbiography}

% if you will not have a photo at all:
% \begin{IEEEbiographynophoto}[{\includegraphics[width=1in,height=1.25in,clip,keepaspectratio]{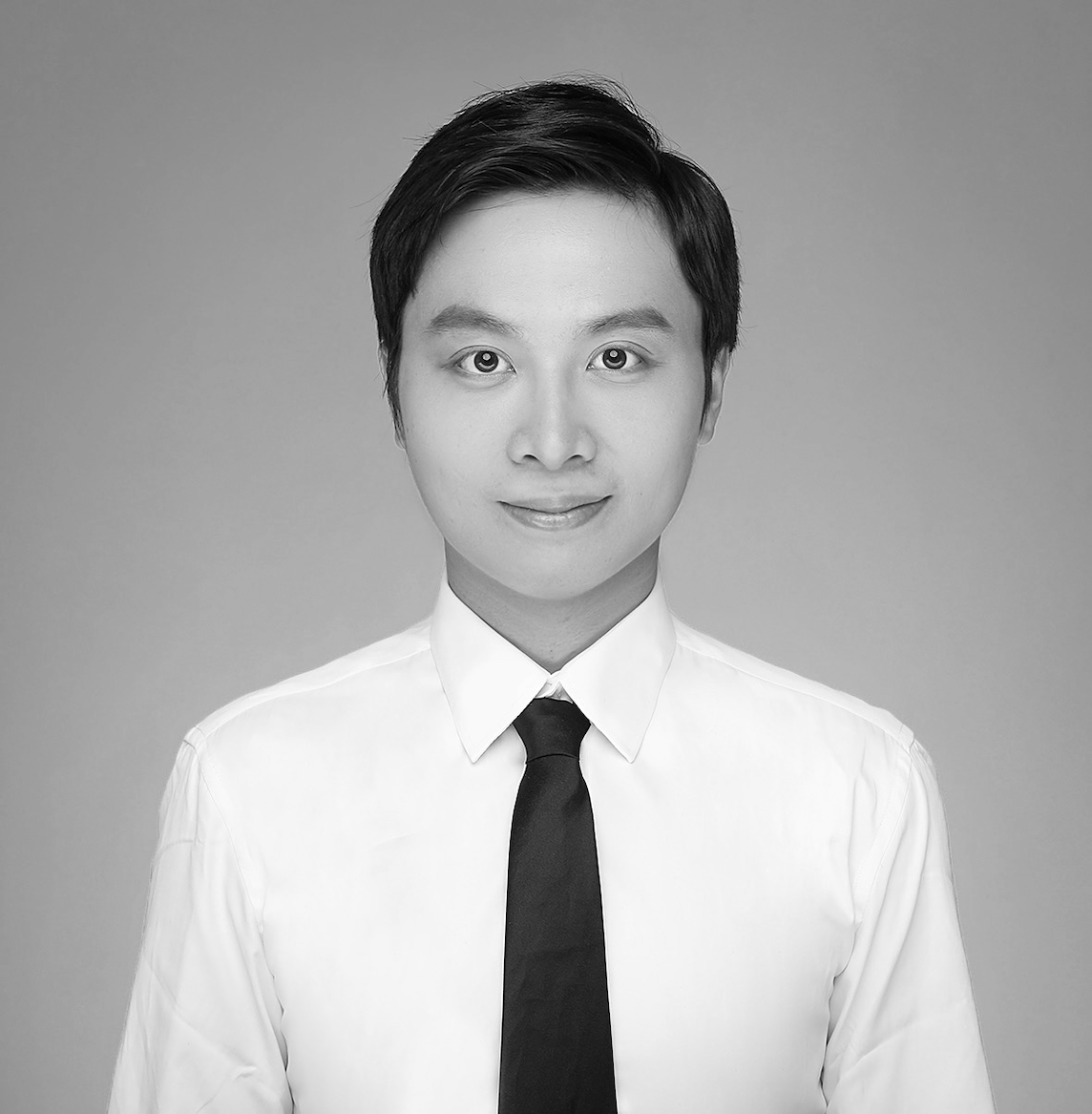}}]{Yongfeng Huang} 
% is preparing to pursue  Ph.D. Degree in the The Hong Kong University of Science and Technology. Previously, he received master degree in Tsinghua University. His research interests include natural language processing and deep graph learning.
% \end{IEEEbiographynophoto}
\begin{IEEEbiography}[{\includegraphics[width=1in,height=1.25in,clip,keepaspectratio]{fig/hardenhuang.png}}]{Yongfeng Huang}
is a research assistant at The Hong Kong University of Science and Technology (Guangzhou). Previously, he received master degree in Tsinghua University. His research interests include natural language processing and deep graph learning.
\vspace{-10mm}
\end{IEEEbiography}

% insert where needed to balance the two columns on the last page with
% biographies
%\newpage

\begin{IEEEbiography}
% \vspace{-10mm}
[{\includegraphics[width=1in,height=1.25in,clip,keepaspectratio]{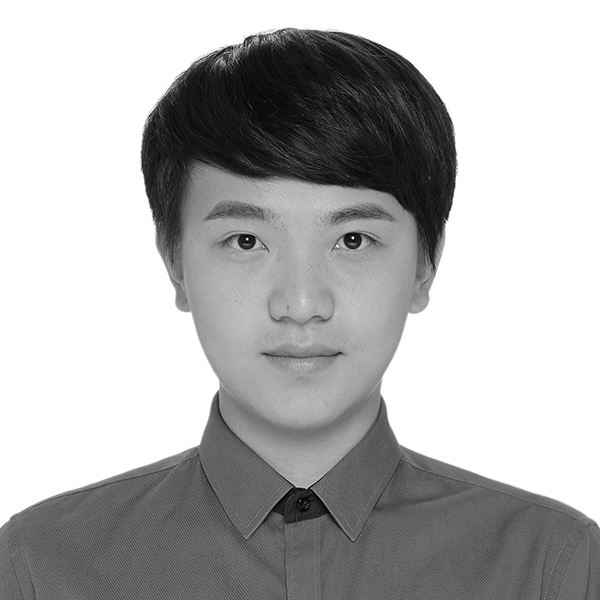}}]{Heng Chang}
is currently pursuing a Ph.D. Degree at Tsinghua University. He received his B.S. from the Department of Electronic Engineering, Tsinghua University in 2017. His research interests focus on graph representation learning and machine learning on graph data, especially in the development of efficient graph learning models, and the 
explanation and robustness of graph neural networks. He has published several papers in prestigious conferences including NeurIPS, AAAI, CIKM, etc.
\vspace{-10mm}
\end{IEEEbiography}

% You can push biographies down or up by placing
% a \vfill before or after them. The appropriate
% use of \vfill depends on what kind of text is
% on the last page and whether or not the columns
% are being equalized.

% if you will not have a photo at all:

\begin{IEEEbiography}[{\includegraphics[width=1in,height=1.25in,clip,keepaspectratio]{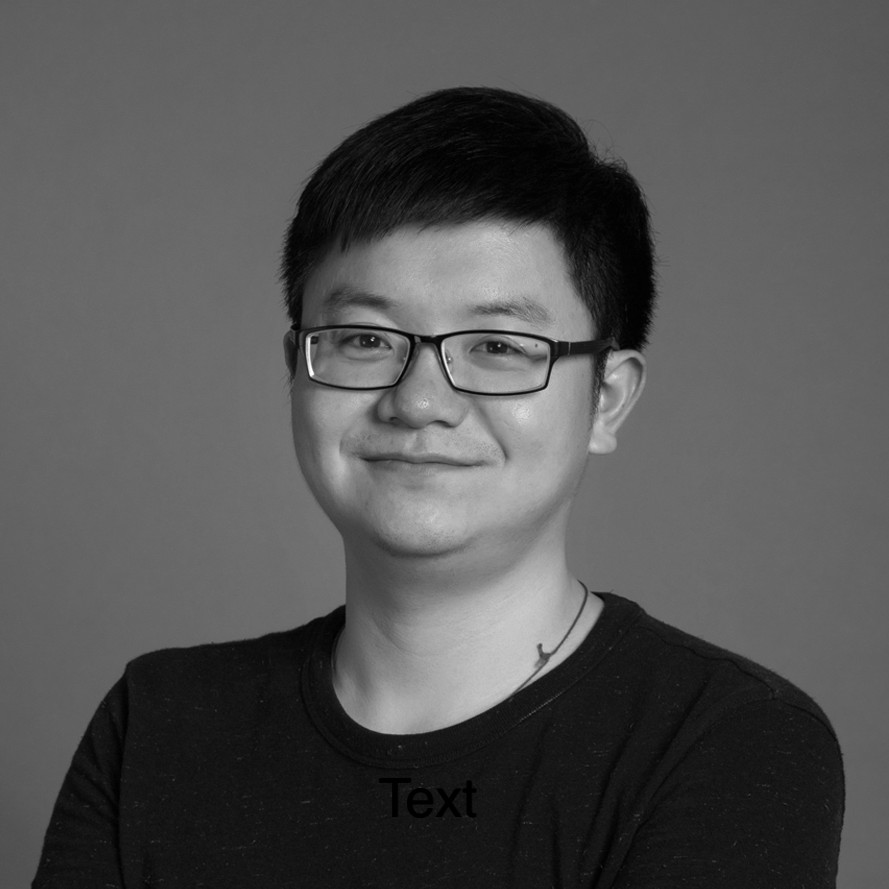}}]{Yu Rong} is a senior researcher of Machine Learning Center in Tencent AI Lab. He received the Ph.D. degree from The Chinese University of Hong Kong in 2016. He joined Tencent AI Lab in June 2017.  His main research interests include social network analysis, graph neural networks, and large-scale graph systems. In Tencent AI Lab, he is working on building the large-scale graph learning framework and applying the deep graph learning model to various applications, such as ADMET prediction and malicious detection. He has published several papers on data mining, machine learning top conferences, including KDD, WWW, NeurIPS, ICLR, CVPR, ICCV.
\vspace{-10mm}
\end{IEEEbiography}
%\end{IEEEbiographynophoto}

%\begin{IEEEbiography}[{\includegraphics[width=1in,height=1.25in,clip,keepaspectratio]{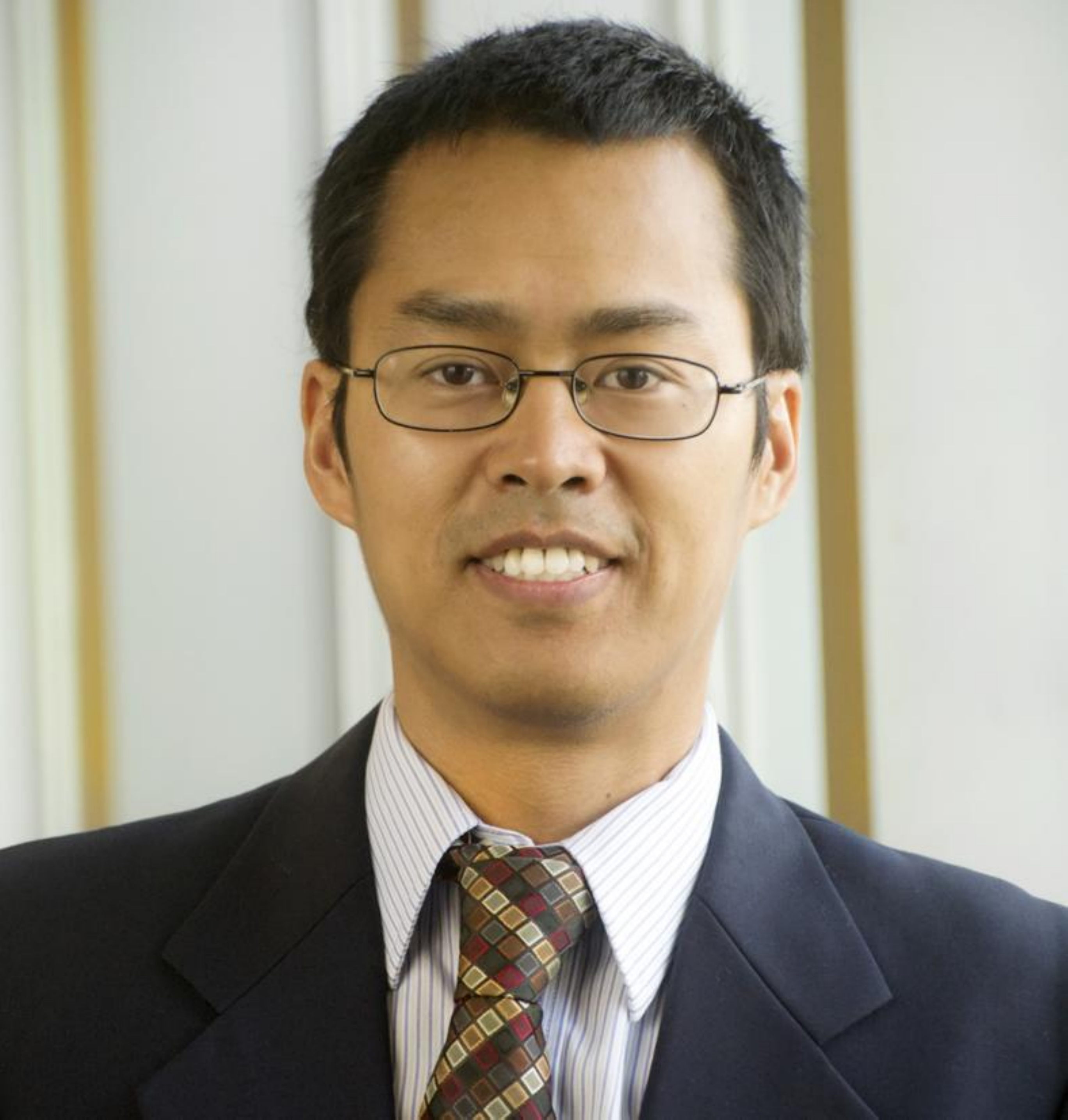}}]{Junzhou Huang}
% is an Associate Professor in the Computer Science and Engineering department at the University of Texas at Arlington. He also served as the director of machine learning center in Tencent AI Lab. He received the B.E. degree from Huazhong University of Science and Technology, Wuhan, China, the M.S. degree from the Institute of Automation, Chinese Academy of Sciences, Beijing, China, and the Ph.D. degree in Computer Science at Rutgers, The State University of New Jersey. His major research interests include machine learning, computer vision and imaging informatics. He was selected as one of the 10 emerging leaders in multimedia and signal processing by the IBM T.J. Watson Research Center in 2010. His work won the MICCAI Young Scientist Award 2010, the FIMH Best Paper Award 2011, the MICCAI Young Scientist Award Finalist 2011, the STMI Best Paper Award 2012, the NIPS Best Reviewer Award 2013, the MICCAI Best Student Paper Award Finalist 2014 and the MICCAI Best Student Paper Award 2015. He received the NSF CAREER Award in 2016. 
%\end{IEEEbiography}

%\vfill

% Can be used to pull up biographies so that the bottom of the last one
% is flush with the other column.
%\enlargethispage{-5in}

% that's all folks
\end{document}